  \providecommand\BibTeX{{%
    \normalfont B\kern-0.5em{\scshape i\kern-0.25em b}\kern-0.8em\TeX}}}
\newtheorem{theorem}{Theorem}
\newtheorem{lemma}{Lemma}
\newtheorem{corollary}{Corollary}
\declaretheorem[
  shaded={rulecolor=black, rulewidth=0.5pt, bgcolor=white},
  name=Theorem,
]{thmboxed}
\newtheorem*{T2}{Theorem~\ref{thm:analytical-calculation-of-success-rate}}
\newtheorem*{L2}{Lemma~\ref{lemma:channel-with-buffers}}
\begin{document}



\title[]{State-Dependent Processing in Payment Channel Networks for Throughput Optimization}


\author{Nikolaos Papadis}
\affiliation{%
    \institution{Electrical Engineering \& Institute for Network Science, Yale University}
  \city{New Haven}
  \state{Connecticut}
  \country{USA}}
\email{nikolaos.papadis@yale.edu}

\author{Leandros Tassiulas}
\affiliation{%
  \institution{Electrical Engineering \& Institute for Network Science, Yale University}
  \city{New Haven}
  \state{Connecticut}
  \country{USA}}
\email{leandros.tassiulas@yale.edu}

\renewcommand{\shortauthors}{}

\begin{abstract}
Payment channel networks (PCNs) have emerged as a scalability solution for blockchains built on the concept of a payment channel: a setting that allows two nodes to safely transact between themselves in high frequencies based on pre-committed peer-to-peer balances. 
Transaction requests in these networks may be declined because of unavailability of funds due to temporary uneven distribution of the channel balances. 
In this paper, we investigate how to alleviate unnecessary payment blockage via proper prioritization of the transaction execution order.
Specifically, we consider the scheduling problem in PCNs: as transactions continuously arrive on both sides of a channel, nodes need to decide which ones to process and when in order to maximize their objective, which in our case is the channel throughput.
We introduce a stochastic model to capture the dynamics of a payment channel under random arrivals, and propose that channels can hold incoming transactions in buffers up to some deadline in order to enable more elaborate processing decisions.
We describe a policy that maximizes the channel success rate/throughput for uniform transaction requests of fixed amounts, both in the presence and absence of buffering capabilities, and formally prove its optimality.
We also develop a discrete event simulator of a payment channel, and evaluate different heuristic scheduling policies in the more general heterogeneous amounts case, with the results showing superiority of the heuristic extension of our policy in this case as well.
Our work opens the way for more formal research on improving PCN performance via joint consideration of routing and scheduling decisions.

\end{abstract}

\maketitle


\section{Introduction}

Blockchain technology enables trusted collaboration between untrusted parties that want to reach consensus in a distributed setting. 
This is achieved with the help of a distributed ledger, which is maintained by all interested nodes in the network and functions as the source of truth.
The original application of blockchain, Bitcoin \cite{Nakamoto2008}, as well as many subsequent ones, focus on the problem of distributed consensus on a set of financial transactions and the order in which they were executed.
Agreement on the above provides a way for everyone to be able to prove that they own the amount they claim to own, without a central entity such as a bank, which is a trusted institution charged with this role in the traditional economic activity.
Transactions are organized in blocks and blocks are chained to form the ledger, or the \textit{blockchain}.
In order for a single node to be able to amend history to its benefit, significant power in the network is required.
In Proof of Work blockchains (including Bitcoin) for example, which rely on nodes expending computation on solving a hard hash puzzle to include their block in the chain, the attacking node should own a certain fraction of the computational power, while in Proof of Stake blockchains, which rely on nodes staking their wealth in order to publish new blocks, the attacker should own a certain fraction of the network's stake.
Accountability and transparency are thus guaranteed as long as each node's share in the network power is limited.

Despite their success with solving distributed consensus, a major pain point of blockchains is their scalability \cite{Cromanetal2016, Papadis2018, prism}.
Compared to a centralized system, where everyone communicates with a single entity functioning as the source of truth, decentralizing this operation and assigning this role to the entire network introduces significant overheads in communication and in complexity.
The frequently cited figures for the transactions per second (throughput) achieved by the two most prominent cryptocurrencies, 3-7 for Bitcoin and about double that for Ethereum, are a good indication of the scalability problem, especially as centralized counterparts such as PayPal or Visa achieve throughput of thousands of transactions per second.
Therefore, for blockchain to be a long-term viable payment solution, this scalability barrier has to be overcome.

A promising development in the scalability front is brought by the introduction of payment channel networks (PCNs).
PCNs are a ``layer-2'' solution based on the idea that the majority of transactions are only communicated to the interested parties instead of the entire global network, and the global network is only consulted in case of disputes. 
The main building block of a PCN is the concept of a payment channel: two entities from layer-1 (the blockchain network itself) that want to transact frequently between themselves and do not need nor want the entire network confirming and knowing, can form a payment channel via a smart contract recorded on the blockchain and validated by the entire network. 
After the channel is created, the nodes can transact privately and orders of magnitude faster than done via the main layer-1 network. 
Payment channels form a network themselves, the PCN, in which multihop payments are possible, and intermediate nodes relaying payments make profit from collected fees. The most prominent PCN as of now are the Lightning Network \cite{Poon2016} and the Raiden Network \cite{RaidenNetwork}.

Sending payments via the network formed by the channels requires appropriate payment routing, scheduling, and congestion control, to guarantee sufficient success rates and throughput. 
A multi-hop transaction might fail if it encounters a channel with insufficient balance to process it on its path.
Several routing approaches have been proposed for proper path selection \cite{Papadis2020}, including source routing \cite{Poon2016}, max-flow-based approaches \cite{Sivaraman2020, rohrer2017, Yu2018a, celer, Wang2019, Varma2019}, beacon-based routing with proactive aggregation of information \cite{Prihodko2016}, landmark-based routing \cite{Silentwhispers}, embedding-based routing \cite{Speedymurmurs}, distance-vector routing \cite{Hoenisch2018}, and ant routing \cite{Grunspan2018b}. 
Scheduling and congestion control have received little attention, with the notable exception of \cite{Sivaraman2020}.
Most of these schemes employ some heuristic rules and lack formal optimality guarantees.

In this work, we study the transaction scheduling problem is PCNs from a formal point of view.
As transactions continuously arrive at the two sides of each channel, the nodes have to make scheduling decisions: which transactions to process, and when.
We introduce a stochastic model for the channel's operation and derive an optimal policy that allows the channel to operate at the maximum possible throughput, which is beneficial both for the nodes relaying others' payment to collect fees, and for the network overall.
In addition, we advocate for a modification in how transactions are handled by nodes: we introduce pending transaction buffers (queues) at the nodes, and allow the transactions to specify a deadline up to which their sender is willing to wait in order to increase their success probability.
The rationale behind this modification is that an initially infeasible transaction, in the extra time it is given in the buffer compared to being rejected immediately, might become feasible thanks to the updates in the channel balances from transactions executed from the opposite side.
Thus, more elaborate state-dependent scheduling policies become possible, making decisions based not only on the instantaneous balances, but also on the buffer contents (the pending transactions, each with their direction, amount and remaining time to expiration).
In this general setting, we are the first to analytically describe a throughput-maximizing scheduling policy for a payment channel and prove its optimality among all dynamic policies.
Our theoretical results are complemented by experiments in a payment channel simulator we implemented, and on which we test various policies and compare their performance.

In summary, our contributions and insights are the following:
\begin{itemize}

    \item We develop a stochastic model that captures the dynamics of a payment channel in an environment with random transaction arrivals from both sides.

    \item We propose the idea of transaction deadlines and buffering in order to give nodes more freedom in their scheduling decisions, and formulate the scheduling problem in our stochastic model, for a channel both without and with buffering capabilities.

    \item We describe policies that optimize the throughput, the success rate and the blockage when transaction amounts are fixed, and present the optimality proofs for a channel both without and with buffering capabilities. 
    We also introduce two families of heuristic policies for the arbitrary amounts case.

    \item We develop a realistic payment channel simulator that accounts for the simultaneity of payments and implements the node buffering capabilities. 
    We use the simulator to evaluate the different scheduling policies in both the fixed and varying transaction amounts cases.

    \item We discuss the necessity of a joint approach to the fundamental problems of routing and scheduling, using either formal stochastic modeling techniques, or learning-based techniques that leverage the network's operation data.
\end{itemize}

In summary, our paper is the first to formally treat the optimal scheduling problem in a PCN with buffering capabilities.

The remainder of the paper is organized as follows. 
In section \ref{sec:background} we provide an introduction to the operation of payment channels and introduce the idea of transaction buffers. 
In section \ref{sec:problem-formulation} we describe our stochastic model of a payment channel, and in section \ref{sec:policy-section} we present the throughput-optimal scheduling policies, whose optimality we subsequently prove.
In section \ref{sec:heuristic-policies} we present heuristic policies for the more general arbitrary amounts case, and in section \ref{sec:evaluation-results} we describe the experimental setup and the simulator used for the evaluation, and present the results of several experiments we conducted.
In section \ref{sec:extensions} we discuss extensions and generalizations of this work to arbitrary network structures, and in section \ref{sec:related-work} we look into related work.
Finally, section \ref{sec:conclusion} concludes the paper.

\section{Background}
\label{sec:background}

\paragraph{Payment channel operation}

\begin{wrapfigure}{R}{0.5\textwidth}
    \centering
    \includegraphics[width=0.5\textwidth]{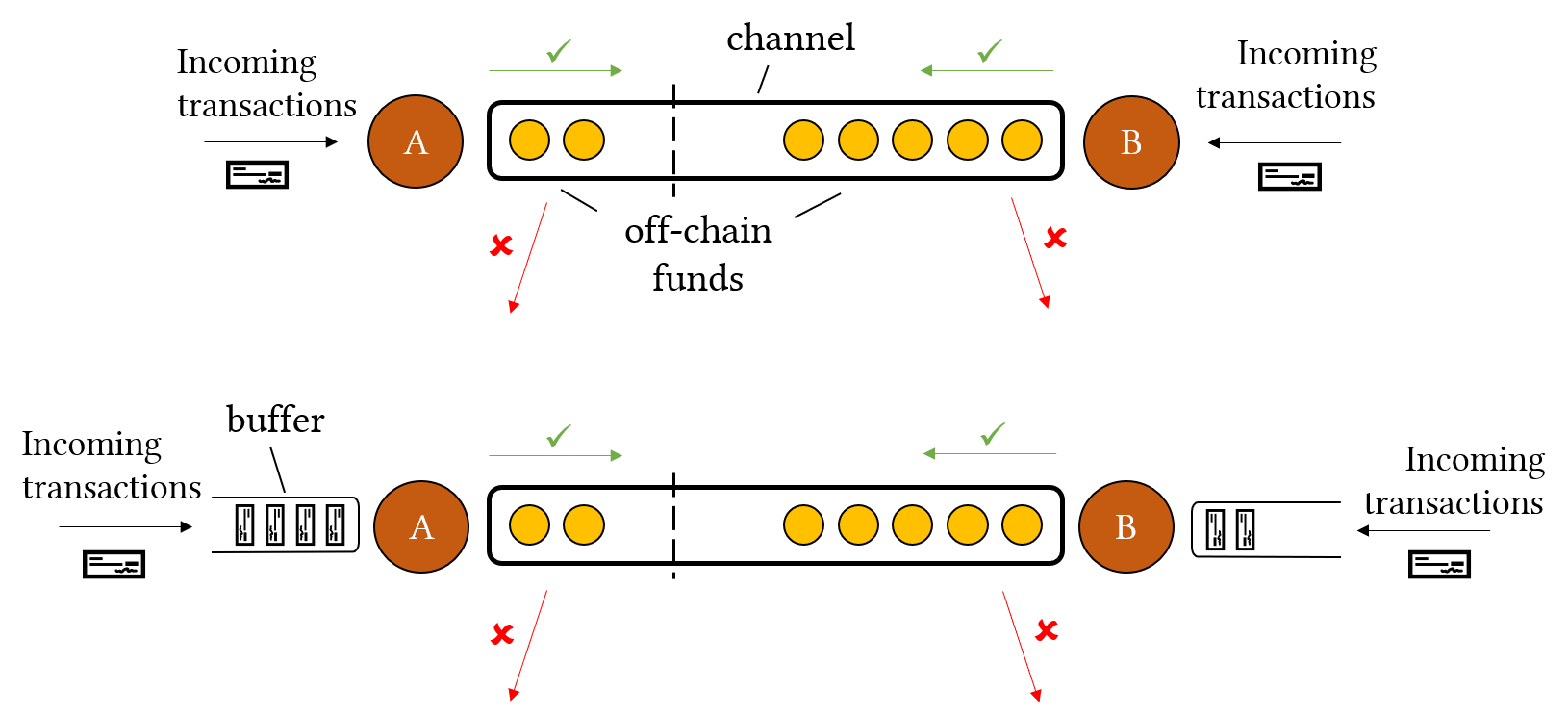}

    \caption{A payment channel without (top) and with (bottom) pending transaction buffers.}
    \label{fig:channel_drawing}
\end{wrapfigure}

Blockchain network nodes A and B can form a payment channel between themselves by signing a common commitment transaction that documents the amounts each of them commits to the channel.
For example, in the channel shown in Figure \ref{fig:channel_drawing}, node $A$'s balance in the channel is 2 coins, and node $B$'s is 5 coins.
After the initial commitment transaction is confirmed by the blockchain network, A and B can transact completely \textit{off-chain} (without broadcasting their interactions to the blockchain), by transferring the coins from one side to the other and updating their balances respectively, without the fear of losing funds thanks to a cryptographic safety mechanism.
The total funds in the channel is its capacity, which remains fixed throughout the channel's lifetime.

As nodes create multiple channels with other nodes, a network (the PCN) is formed.
In this network, if a channel does not already exist between a pair of nodes who want to transact, multihop payments are possible.
A cryptographic construct (the Hashed Time-Lock Contract -- HTLC) is again guaranteeing that the payment will either complete end-to-end, or fail for all intermediate steps.
In Figure \ref{fig:PCN_drawing} for example, node $A$ wants to pay 3 coins to node $C$, and can achieve this by paying 3 to $B$ and then $B$ paying 3 to $C$.
Another possible payment path is $A \rightarrow E \rightarrow D \rightarrow C$, which however does not have enough balance (in the $E \rightarrow D$ channel in particular) to support a payment of 3 coins.
This network creates the need for routing and scheduling of payments to achieve maximum throughput.
For more details on PCN operation, the reader is referred to \cite{Gudgeon2020a, Papadis2020}.

\paragraph{Important Metrics in PCNs}
The metrics usually used for evaluating the performance of a PCN are the payment success rate (what percentage of all transactions complete successfully), the (normalized) throughput (successful amount), and also the fees a node receives from relaying others' transactions.
A node with a lot of activity and high transacting amounts (e.g., a payment hub) might focus more on optimizing throughput, while a node transacting once in a while might care more for individual transactions succeeding.
Since fees are affine in the payment amount \cite{Papadis2020}, for a specific node to maximize the throughput of its channels is in some sense\footnote{Not strictly equivalent because of the constant term: fee = \textit{constant base fee} + \textit{proportional fee rate} $\cdot$ amount} equivalent to maximizing the fees it is earning.
Therefore, in this work we are concerned with maximizing the success rate and throughput and do not deal with fees.
Maximizing the throughput is equivalent to minimizing blockage, i.e. the amount of rejected transactions.

\begin{wrapfigure}{L}{0.3\textwidth}
    \centering
    \includegraphics[width=0.3\textwidth]{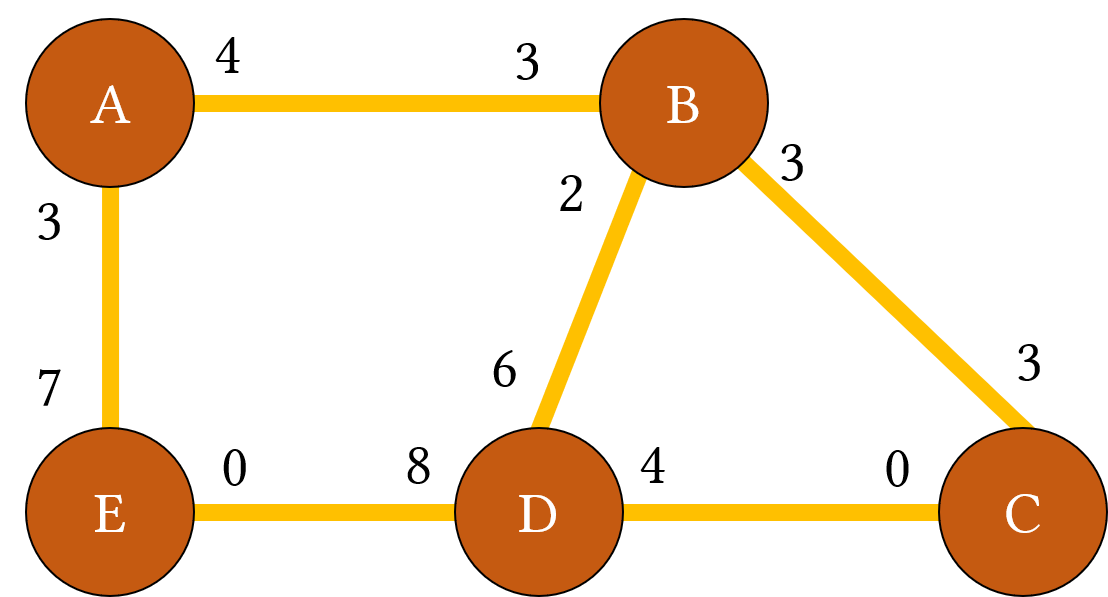}

    \caption{A payment channel network.}
    \label{fig:PCN_drawing}
\end{wrapfigure}

\paragraph{Payment scheduling policy}
The default processing mechanism in a payment channel is the following: feasible transactions are executed immediately, and infeasible transactions are rejected immediately.
In order to optimize success rates and throughput, in this work we examine whether the existence of a transaction buffer, where transactions would be pending before getting processed or rejected, would actually increase the channel performance.
We assume that the sender of every transaction (or a higher-level application which the transaction serves) specifies a deadline at most by which they are willing to wait before their transaction gets processed/rejected. 
A fine balance when choosing a deadline would be to push transactions execution to the future as much as possible in order to allow more profitable decisions within the deadline, but not too much to the extent that they would be sacrificed.
Depending on the criticality of the transaction for the application or the sender, the deadline in practice could range from a few milliseconds to a few minutes.
Note that this deadline is different than other deadlines used by the Bitcoin and Lightning protocols in time-locks (CheckLockTimeVerify -- CLTV and CheckSequenceVerify -- CSV) \cite{cltv}, as the latter are related to when certain coins can be spent by some node, while the deadline in our case refers to a Quality of Service requirement of the application.

\section{Problem formulation}
\label{sec:problem-formulation}

In this section, we introduce a stochastic model of a payment channel and define the transaction scheduling problem in a channel with buffers.

Consider an established channel between nodes $A$ and $B$ with capacity denoted by some positive natural number\footnote{All monetary quantities can be expressed in integer numbers, as in cryptocurrencies and currencies in general there exists some quantity of the smallest currency denomination, and all amounts can be expressed as multiples of this quantity. For Bitcoin, this quantity is 1 satoshi (=$10^{-8}$ bitcoins) or 1 millisatoshi.} $C$. 
Define $Q^A(t)$, $Q^B(t)$ to be the balances of nodes $A$ and $B$ in the channel at time $t$, respectively.
The capacity of a channel is constant throughout its lifetime, so obviously $Q^A(t) + Q^B(t) = C$ for all times $t \in \mathbb{R}_{+}$. We consider a continuous time model.

Transactions are characterized by their origin and destination ($A$-to-$B$ or $B$-to-$A$), their timestamp (time of arrival) $t$ and their amount $v$.
These elements are enough to describe the current channel operation in a PCN like Lightning, namely without the existence of a buffer. 
We additionally augment each transaction with a maximum buffering time, or equivalently, a deadline by which it has to be processed.
We denote the value of a transaction from $A$ to $B$ arriving at time $t_n^A$ as $v_n^A$ and its maximum buffering time as $d_n^A$ (and similarly $t_n^B, v_n^B, d_n^B$ for transactions from $B$ to $A$).
Transactions arrive at the two nodes as marked point processes: from $A$ to $B$: $\{(t_n^A, d_n^A, v_n^A)\}_{n=1}^\infty$, and from $B$ to $A$: $\{(t_n^B, d_n^B, v_n^B)\}_{n=1}^\infty$. 
Denote the deadline expiration time of the transaction as $\tau_n^A \triangleq t_n^A + d_n^A$ (similarly for B).
Denote the set of all arrival times at both nodes as $T_{\text{arrival}} = \{t_n^A\}_{n=1}^\infty \cup \{t_n^B\}_{n=1}^\infty$, and the set of all deadline expiration times as $T_{\text{expiration}} = \{\tau_n^A\}_{n=1}^\infty \cup \{\tau_n^B\}_{n=1}^\infty$.

The state of the system comprises the instantaneous balances and the contents of the buffers. 
The state at time $t$ is 
\begin{align}
\begin{split}
x(t) =~\Bigl( &Q^A(t), Q^B(t), \\
        &D_1^A(t), ..., D_{K^A(t)}^A(t), v_1^A(t), ..., v_{K^A(t)}^A(t), \\
        &D_1^B(t), ..., D_{K^B(t)}^B(t), v_1^B(t), ..., v_{K^B(t)}^B(t) \Bigr) 
\end{split}
\end{align}
where $K^A(t)$ is the number of pending transactions in node $A$'s buffer at time $t$ (similarly for $K^B(t)$),  $D_k^A(t)$ is the remaining time of transaction $k$ in node $A$'s buffer before its deadline expiration (similarly for $D_k^B(t)$), and $v_k^A(t)$ is the amount of the $k$-th transaction in node $A$'s buffer (similarly for $v_k^B(t)$).
For the channel balances, it holds that $(Q^A, Q^B) \in \{(a, b) \in [C] \times [C]: a + b = C\}$, where $[C] =  \{0,1,\dots,C\}$.
For simplicity, we assume that the pending transactions in each node's buffer are ordered in increasing remaining time order. So $D_1^A(t) \leq D_2^A(t) \leq ... \leq D_{K^A(t)}^A(t)$, and similarly for $B$.

A new arriving transaction causes a transition to a state that includes the new transaction in the buffer of the node it originated from. 
The evolution of the system is controlled, with the controller deciding whether and when to serve each transaction.
At time $t$, the set of possible actions at state $x(t)$ is a function of the state and is denoted by $U(x(t))$.
Specifically, a control policy at any time $t$ might choose to process (execute) some transactions and drop some others.
When a transaction is processed or dropped, it is removed from the buffer where it was stored.
Additionally, upon processing a transaction the following balance updates occur:
\begin{align*}
    (Q^A, Q^B) &\rightarrow (Q^A - v, Q^B + v), &   \text{if the processed transaction is from A to B and of amount $v$} \\
    (Q^A, Q^B) &\rightarrow (Q^A + v, Q^B - v), &   \text{if the processed transaction is from B to A and of amount $v$}
\end{align*}

At time $t$, the allowable actions $u(t)$ are subsets of the set $U'(t) = \{ (node, k, action): node \in \{A, B\}, 1 \leq k \leq K^{node}(t), action \in \{EX,DR\} \}$ that contain transactions in a specific order such that executing and dropping them in that order is possible given the channel state at time $t$.
Action $EX$ means ``execute the transaction,'' while action $DR$ means ``drop the transaction.''
Formally, 
\begin{align}
\begin{split} 
u(t) &\in U(x(t)) = \\
    \bigl\{ &u = \{(node_i, k_i, action_i)\}_{i=1}^l \in \mathcal{P}(U'(t)): \\
    &\forall i = 1, \dots, l, action_i \text{ on the $k_i$-th transaction of $node_i$ is feasible after applying} \\
    &\text{the first $i-1$ actions on the respective transactions} \bigr\}
\end{split}
\end{align}
where $\mathcal{P}$ denotes the powerset of a set.
Note that the empty set is also an allowable action and means that at that time the control policy idles (i.e. neither processes nor drops any transaction).
An expiring transaction that is not processed at the time of its expiration is automatically included in the dropped transactions at that time instant.

Having defined all the possible actions, we should note the following: in the presence of a buffer, more than one transaction might be executed at the same time instant, either because two or more transactions expire at that time, or because the policy decides to process two or more. 
The total amount processed (if $action = EX$) or dropped (if $action = DR$) by the channel at time $t$ is:
\begin{equation}
\Tilde{v}_{action}^{u(t)}(t) = \sum_{(k,node,action) \in u(t)} v_k^{node} (t)
\end{equation}

For example, if $u(t) = \{ (A, 2, EX), (B, 3, DR), (B, 1, EX) \}$ (meaning that at time $t$ the chosen action is to execute the second transaction from the buffer of node $A$, drop the third transaction from the buffer of node $B$, and execute the first transaction from the buffer of node $B$), then $\Tilde{v}_{EX}^{u(t)}(t) = v_2^A + v_1^B$ and $\Tilde{v}_{DR}^{u(t)}(t) = v_3^B$.

A control policy $\pi = \{(t, u(t))\}_{t \in \mathbb{R}_{+}}$ consists of the times $t$ and the corresponding actions $u(t)$, and belongs to the set of admissible policies
\begin{equation}
\Pi = \bigl\{ \{(t, u(t))\}_{t \in \mathbb{R}_{+}} \text{ such that } u(t) \in U(x(t)) \text{ for all } t \in \mathbb{R}_{+} \bigr\}
\end{equation}

The total amount of transactions that have arrived until time $t$ is
\begin{align}
V_{\text{total}}(t) = \sum_{\substack{n \in \mathbb{N}:~t_n \leq t}} v_{n}
\end{align}

The total throughput (i.e. volume of successful transactions) up to time $t$ under policy $\pi$ is:
\begin{align}
S^{\pi}(t) = \int_{\tau = 0}^t \Tilde{v}_{EX}^{u(\tau)}(\tau) d\tau
\end{align}

The total blockage (i.e. volume of rejected transactions) up to time $t$ under policy $\pi$ is:
\begin{align}
R^{\pi}(t) = \int_{\tau = 0}^t \Tilde{v}_{DR}^{u(\tau)}(\tau) d\tau
\end{align}

The amount of pending transactions under policy $\pi$ is then the difference between the total amount and the sum of the successful and rejected amounts:
\begin{align}
\label{eqn:equivalance-of-S-and-R-1}
P^{\pi}(t) = V_{\text{total}}(t) - S^{\pi}(t) - R^{\pi}(t)
\end{align}



The objective is to maximize the total channel throughput (or minimize the total channel blockage) over all admissible dynamic policies.

A few final notes on the assumptions: We assume that both nodes have access to the entire system state, namely to the buffer contents not only of themselves, but also of the other node in the channel. 
Therefore, in our model, referring to one buffer per node or to a single shared buffer between the nodes is equivalent.
Moreover, our implicit assumption throughout the paper is that the buffer sizes are not constrained.
This implies that allowing or disallowing a ``Drop'' action does not make a difference in terms of the optimality a policy can achieve.
To see this, suppose that node $A$ wants to drop a transaction at some time before its expiration deadline, including its arrival time.
What $A$ can do is wait until the transaction's expiration without processing it, and then it will automatically expire and get dropped.
This has the same effect as dropping the transaction earlier.
Although a ``Drop'' action does not give add or remove any flexibility from an optimal policy, it is helpful for simplifying the proof of Lemma \ref{lemma:channel-with-buffers}, and so we adopt it.
If, however, the buffer sizes are limited, then the need for nodes to select which transactions to keep pending in their buffers arises, and dropping a transaction as soon as it arrives or at some point before its expiration deadline might actually lead to a better achieved throughput.
As this case likely makes the problem combinatorially difficult, we do not consider it in the present work.

The notation defined so far is summarized in Table \ref{table:notation} in Appendix \ref{app:summary-of-notation}.

\section{Throughput-optimal scheduling in a payment channel}
\label{sec:policy-section}

In this section, we determine a scheduling policy for the channel and prove its optimality. 
The policy takes advantage of the buffer contents to avoid dropping infeasible transactions by compensating for them utilizing transactions from the opposite side's buffer.

We first note that buffering does not only apply to transactions that are infeasible on arrival, as in done for example in \cite{Sivaraman2020}.
An example where buffering even transactions that are feasible at their time of arrival and not processing them right away can actually improve the success rate and the throughput is shown in Figure \ref{fig:example}.
At $t = 0$, node $A$ has a balance of $Q^A(0) = 7$, and two transactions from A to B in its buffer, with remaining times and values as follows:
$(D_1^A(0), v_1^A) = (3,9), (D_2^A(0), v_2^A) = (5,2)$.
At $t = 1$, a transaction of amount 2 from B to A arrives and is processed immediately.
At $t = 4$, another transaction of amount 2 from B to A arrives and is processed immediately.
Now consider the two cases: 
\begin{itemize}
    \item If the transaction (5,2) is executed at $t = 0$, then the transaction (3,9) will be rejected. 
    In this case, at $t = 5$ the number of successful transactions is 3 out of 4, and the throughput is 6.

    \item If the transaction (5,2) waits until its deadline (which expires at $t = 5$), then both (5,2) and (3,9) will go through. 
    In this case, at $t = 5$ the number of successful transactions is 4 out of 4, and the throughput is 15.
\end{itemize}
Therefore, although (5,2) is feasible at the time of its arrival, not processing it directly and placing it into the buffer for subsequent processing (as done in the second case) leads to more transactions being executed and higher throughput eventually.

\begin{figure}[h]
    \centering
    \includegraphics[scale=0.4]{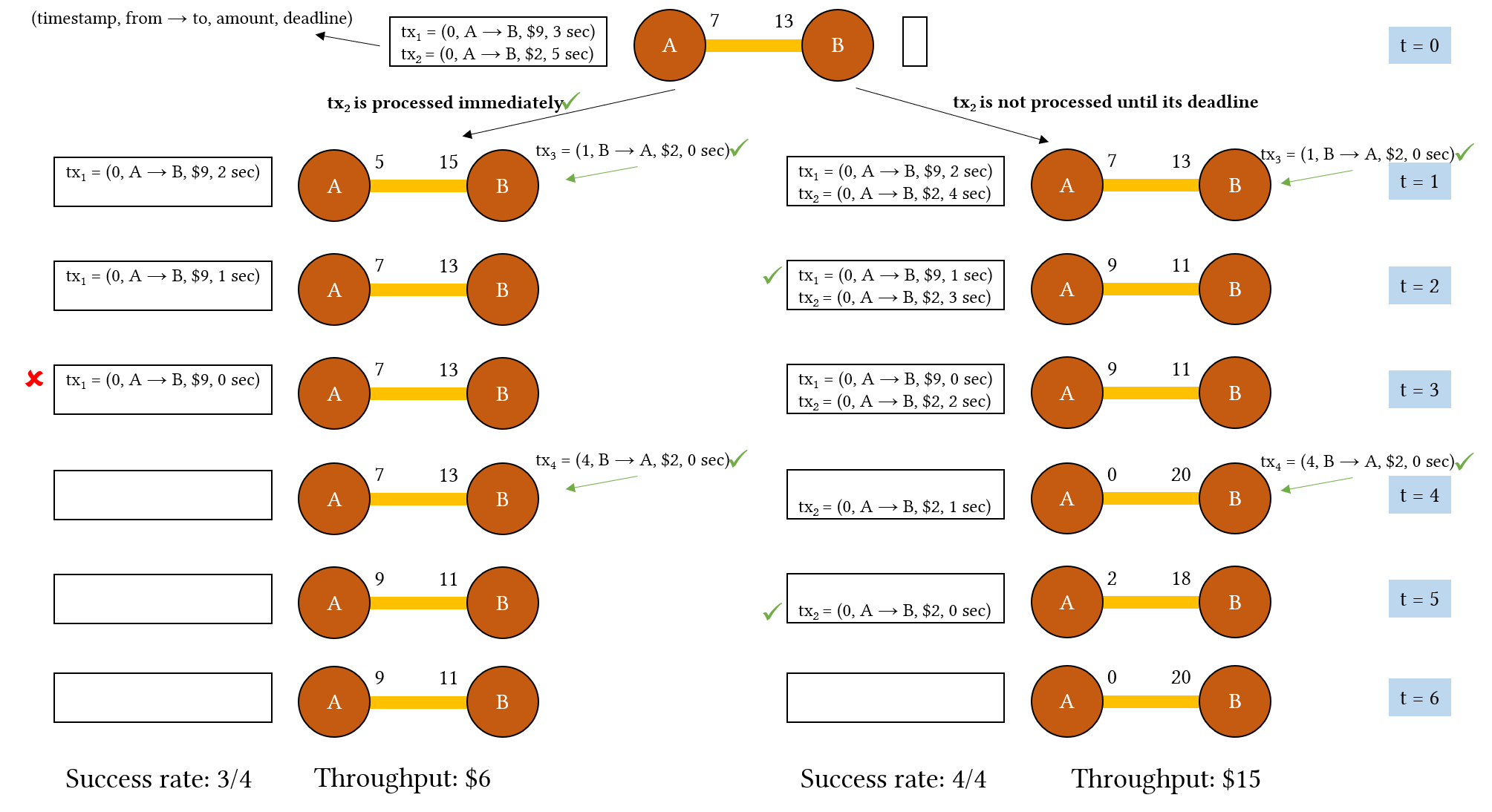}
    \caption{An example demonstrating that buffering even transactions feasible at the time of their arrival can increase the success rate and the throughput.}
    \Description{}
    \label{fig:example}
\end{figure}

Although the benefit from buffering transactions is intuitive, in the general case where arriving transaction amounts are allowed to vary, finding an optimal policy is intractable.
Specifically, for a single channel without buffers and for transactions of varying amounts, finding an optimal policy that maximizes the number of transactions executed (equivalently, the success rate) is NP-hard.
An offline version of this problem with a finite input is defined in \cite{ACD}: $N$ transactions $\{(t_n^{A/B}, v_n^{A/B})\}_{n=1}^N$ of monetary value $v_n$ arrive at times $t_n$ from either side, and the goal is to find a subset of the arriving transactions to be executed in the order of arrival that maximizes the number of successful executions.
(To see how this problem fits in our case, consider our more general model of section \ref{sec:problem-formulation} with all buffering times equal to zero).
The decision version of the problem is proven (as Problem 2 with proof in section 3.2 of \cite{ACD}) to be NP-complete.
Therefore, finding an optimal policy in the general online setting of a single channel with possibly infinite input of transactions is intractable.
We expect that the same is true when the objective is to maximize the total throughput.
For this reason, in the theoretical part of the paper we focus our attention on the online case of a single channel with equal amounts for all arriving transactions, for which an optimal policy can be analytically found.

\subsection{General case: channel with buffers}
\label{sec:general-case-channel-with-buffers}

We define policy PMDE (Process or Match on Deadline Expiration) for scheduling transactions in the fixed amounts case.
The optimality of PMDE will be shown in the sequel and is the main result of this paper.


\begin{algorithm}[ht]
\LinesNumbered
\SetKwFor{On}{on}{do}{}
\KwIn{channel state (balances and buffer contents)}

\On{arrival of transaction $p_n^A$ at time $t_n^A$} {
    add $p_n^A$ to A's buffer
}

\On{deadline expiration of transaction $p_n^A$ at time $\tau_n^A$} {
    \uIf{$p_n^A$ is in A's buffer at time $\tau_n^{A-}$}{
        \If{$Q^A(\tau_n^{A-}) \geq v_n^A$}{
            execute $p_n^A$\;
        }
        \uElseIf{$Q^A(\tau_n^{A-}) < v_n^A$ and $Q^B(\tau_n^{A-}) \geq v_n^A$ and $K^B(\tau_n^{A-}) \geq 1$}{
            execute the transaction with remaining time $D_1^B(\tau_n^{A-})$ from $B$ to $A$\;
            execute $p_n^A$\;
        }
        \Else{
            drop $p_n^A$\;
        }
    }
    \Else{
        idle
    }
}
\caption{PMDE scheduling policy (Process or Match on Deadline Expiration)}
\label{alg:PMDE-policy}
\end{algorithm}
The policy is symmetric with respect to nodes A and B.

In words, PMDE operates as follows:
Arriving transactions are buffered until their deadline expires. 
On deadline expiration (actually just before, at time $\tau_n^{A-}$), if the expiring transaction is feasible, it is executed. 
If it is not feasible and there are pending transactions in the opposite direction, then the transaction with the shortest deadline from the opposite direction is executed, followed immediately by the execution of the expiring transaction. 
Otherwise, the expiring transaction is dropped.

Note that the only information sharing between the two nodes PMDE requires is the expiring transaction(s) at the time of expiration, information which would be revealed anyway at that time.
So PMDE is applicable also for nodes not willing to share their buffer's contents.

In the general case of non-fixed transaction amounts, the greedy policy PMDE is not optimal for either objective.
This is shown in the following counterexample.
Consider a channel with balance $10$ at node $A$ and one big transaction of amount $9$ and $5$ small transactions of amounts $2$ arriving in this order from node $A$ to node $B$.
If the big one, which is feasible, is processed greedily immediately, then the small ones become infeasible. The total success rate in this case is $1/6$ and the total throughput is $9$.
While if the big one is rejected, then all the small ones are feasible. The total success rate in this case is $5/6$ and the total throughput is $10$. 
So PMDE is not optimal when transaction amounts are unequal, neither with respect to the success rate, nor with respect to the throughput.

We now proceed to show PMDE's optimality in the equal transaction amount case.
Note that in this case, the objectives of maximizing the success rate and maximizing the throughput are equivalent, as they differ only by a scaling factor (the transaction value divided by the total number of transactions), and have the same maximizing policy.
Note also that combining transactions from the two sides as PMDE does requires that at least one of the transactions is individually feasible.
This will always happen as long as $Q^A(0) \geq v$ or $Q^B(0) \geq v$ in the fixed amounts case\footnote{Even in the general non-fixed amounts case though, the chance of two transactions individually infeasible, that is with amounts larger than the respective balances, occurring in both sides of the channel simultaneously is very small: usually, the transaction infeasibility issue is faced at one side of the channel because the side is depleted and funds have accumulated on the other side.}.

This optimality of PMDE with respect to blockage is stated in Theorem \ref{thm:rejection-optimality}, the main theorem of this paper.
This blockage optimality of PMDE also implies its expected long-term average throughput optimality.

\vspace{5pt}

\begin{thmboxed}
\label{thm:rejection-optimality}
For a payment channel with buffers under the assumption of fixed transaction amounts, let $R$ be the total rejected amount when the initial state is $x(0)$ and transaction are admitted according to a policy $\pi \in \Pi$, and $R^{PMDE}$ the corresponding process when PMDE is applied instead. 
Then, for any sample path of the arrival process, it holds
\begin{equation}
    R^{PMDE}(t) \leq R^{\pi}(t) \text{ a.s. for all } t \in \mathbb{R}_{+}
\end{equation}
\end{thmboxed}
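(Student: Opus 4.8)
The plan is to prove the statement path by path. I fix an arbitrary realization of the two marked point processes and compare PMDE against an arbitrary admissible $\pi \in \Pi$ driven by the \emph{same} arrivals. Because the buffers are unbounded, I first invoke the reduction already noted before the theorem: an early drop has the same effect as letting a transaction sit until its deadline and expire, so without loss of generality I may assume $\pi$ never drops early and is specified solely by the execution instant it assigns to each arrived transaction within its window $[t_n, \tau_n]$ (respecting feasibility), with ``rejected'' meaning ``expires unexecuted.'' The event set $T_{\text{arrival}} \cup T_{\text{expiration}}$ together with $\pi$'s execution instants is locally finite, so I would run a forward induction over these events in increasing time order, maintaining a coupling invariant that implies $R^{PMDE}(t) \leq R^{\pi}(t)$ at every $t$, and then recover the infinite-horizon claim by applying the induction up to each fixed $t$.

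The invariant I would carry is a state-domination $x_{PMDE}(t) \succeq x_{\pi}(t)$ with two facets: (i) $R^{PMDE}(t) \leq R^{\pi}(t)$; and (ii) whatever set of still-live transactions $\pi$ will manage to execute in $(t, \infty)$, PMDE can execute at least as many of its own still-live transactions, formalized by an injection from $\pi$'s future successes into PMDE's pending transactions that respects direction, deadlines, and the balance constraint $0 \leq Q^A \leq C$. Two bookkeeping facts make facet (ii) tractable. First, each unmatched execution shifts $Q^A$ by $\pm v$ and must keep $0 \leq Q^A \leq C$. Second, a PMDE \emph{match} executes an opposite pair with the replenishing side first, so the guard $Q^B(\tau_n^{A-}) \geq v$ guarantees the pair is feasible, and the pair is balance-neutral, hence never tightens $0 \leq Q^A \leq C$ for later transactions. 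Consequently matched pairs are ``free'' and only unmatched singletons are balance-constrained; PMDE's rule of deferring every transaction to its deadline and matching exactly when it would otherwise be infeasible is precisely the strategy that postpones committing balance until the last moment while rescuing endangered transactions in balance-neutral pairs.

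For the inductive step, arrivals preserve the invariant trivially, since both systems buffer the same transaction and PMDE's pending set only grows. At a PMDE deadline event $\tau_n^A$ the three cases of the algorithm fire: if $p_n^A$ is feasible it is executed; if it is infeasible but matchable it is paired with the soonest-expiring opposite transaction (the one with remaining time $D_1^B$); otherwise it is dropped, and I must then show $\pi$ also loses a transaction so that facet (i) survives. The heart of the argument is twofold. The choice of the \emph{shortest-remaining-deadline} partner is justified by an exchange argument: any continuation of $\pi$ that rescues $p_n^A$ using a later-expiring opposite transaction can be rewritten to use the soonest-expiring one without reducing the number of future successes, because the soonest-expiring partner has the tightest window.

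I expect the main obstacle to be showing that facet (ii) survives $\pi$'s \emph{early, unmatched} executions. Such a move shifts $\pi$'s balance in one direction and can make a transaction feasible for $\pi$ at an intermediate time that is not simultaneously feasible along PMDE's deferred trajectory. The essential tension is that a high $Q^A$ helps $A \to B$ transactions but hurts $B \to A$ ones, so deferral is not monotonically beneficial and the injection underlying (ii) must be re-built after every such move. Resolving this requires proving that whenever $\pi$ gains feasibility in one direction through an early singleton it symmetrically forfeits it in the other, so that PMDE's deferred-and-matched bookkeeping always keeps at least as many future executions available; this is exactly the point where balance-neutrality of matching and unbounded-buffer deferral must be combined, and where the construction needs the most care. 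Once the invariant is established, the theorem follows, and because all amounts equal $v$, minimizing the rejected count coincides with minimizing the rejected amount and with maximizing throughput, so the blockage optimality transfers to throughput as claimed.
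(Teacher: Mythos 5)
Your proposal is a plan rather than a proof, and the point where it stops is precisely the crux of the theorem. You acknowledge this yourself: facet (ii) of your invariant must survive $\pi$'s early, unmatched executions, and you state that resolving this ``requires proving that whenever $\pi$ gains feasibility in one direction through an early singleton it symmetrically forfeits it in the other'' --- but this claim is never proved, and it is not a routine detail: it is essentially equivalent to the optimality statement itself. There is also a structural flaw in the invariant independent of that admitted gap. You define (ii) as an injection from $\pi$'s future successes into PMDE's \emph{pending} transactions, respecting direction, deadlines and balances. But PMDE is a fixed greedy rule whose future actions you do not control; an injection certifying that PMDE's pending transactions \emph{could} be executed does not certify that PMDE \emph{will} execute that many (PMDE may later drop exactly the transactions your injection points to). Closing that gap needs an argument of the same strength as the theorem --- this is the standard pitfall of direct forward inductions for greedy-optimality claims.

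The paper avoids both problems by an interchange (policy-improvement) argument instead of a forward domination invariant. It first reduces to policies acting only at deadline-expiration times (Lemma \ref{lemma:DE-policies-suffice}, the analogue of your no-early-drop normalization). The key step is then the one-step Lemma \ref{lemma:channel-with-buffers}: given any $\pi \in \Pi^{DE}$, one constructs $\tilde{\pi} \in \Pi^{DE}$ that acts like PMDE at $\tau_1$ \emph{only}, and is coupled with $\pi$ afterwards so that $\tilde{R}(t) \leq R(t)$ at all expiration times. The construction never needs to control PMDE's future: after deviating at $\tau_1$, $\tilde{\pi}$ replays $\pi$'s actions in a two-phase catch-up at $\tau_2$, the discrepancy between the two systems stays bounded by a single transaction ($\pm v$ in balance and $\pm v$ in blockage), and the states provably re-merge the first time $\pi$ attempts an action that is infeasible for $\tilde{\pi}$ (that transaction is dropped, cancelling the discrepancy). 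Iterating this lemma produces policies $\pi_k$ that agree with PMDE up to $\tau_k$ with monotonically non-increasing blockage, and taking $k \to \infty$ gives the theorem; domination at expiration times extends to all $t$ because blockage only jumps at drops. To salvage your route you would need to replace the prophecy-style injection with this kind of one-step interchange; as written, the proposal does not constitute a proof.
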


\vspace{5pt}

We would like PMDE to be maximizing the channel throughput among all dynamic policies.
However, this is not true at every time instant.
To see this, consider another policy that ignores the existence of the buffer and processes transactions immediately as soon as they arrive if they are feasible and drops no transactions, and assume the channel balances are big enough that for some time no transaction is infeasible. 
Then this policy achieves higher throughput in the short term compared to PMDE, as PMDE waits until the deadline expiration to execute a feasible transaction, while the other policy executes it right away.
For example, up to the first deadline expiration, assuming at least one transaction up to then is feasible, the other policy achieves nonzero throughput while PMDE achieves zero throughput.

Therefore, the optimality of PMDE does not hold for the throughput at every time instant. 
It holds for another quantity though: the total blockage (and because of \eqref{eqn:equivalance-of-S-and-R-1}, it also holds for the sum of the successfully processed amounts plus the pending ones).

Let $\Pi^{DE}$ be the class of dynamic policies that take actions only at the times of Deadline Expirations.
We will first prove that to minimize blockage it suffices to restrict our attention to policies in $\Pi^{DE}$.
This is shown in the following lemma.

\begin{lemma}
\label{lemma:DE-policies-suffice}
For every policy $\pi \in \Pi$, there exists another policy $\Tilde{\pi} \in \Pi^{DE}$ that take actions only at the times of deadline expirations, and the states and blockage at the times of deadline expirations under $\Tilde{\pi}$ are the same as under $\pi$:
\begin{equation}
    \Tilde{x}(\tau) = x(\tau)~\text{ and }~\Tilde{R}(\tau) = R(\tau)
\end{equation}
for all $\tau \in T_\text{expiration}$, and for any sample path of the arrival process.
\end{lemma}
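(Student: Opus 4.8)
The plan is to construct $\tilde\pi$ from $\pi$ by \emph{deferring} every action to the next deadline expiration epoch, and then to prove by induction over $T_{\text{expiration}}$ that this deferral changes neither the state nor the cumulative blockage at expiration epochs. Write the expiration times in increasing order as $\tau^{(1)} < \tau^{(2)} < \cdots$ and set $\tau^{(0)} = 0$. Define $\tilde\pi$ as follows: on each open interval $(\tau^{(i)}, \tau^{(i+1)})$ the policy $\tilde\pi$ idles, and then at $\tau^{(i+1)-}$ it applies, \emph{in the same order and on the same physical transactions} (tracked by identity rather than by buffer position) as $\pi$ did throughout $(\tau^{(i)}, \tau^{(i+1)}]$, every execution and every drop that $\pi$ performed in that interval. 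Because the deferral reindexes the buffers, each position-based action $(node, k, action)$ of $\pi$ is first translated into a transaction identity and then re-expressed as a position-based action valid for $\tilde\pi$'s buffer at $\tau^{(i+1)-}$; since all remaining times decay at the same rate, the relative deadline ordering of transactions that are present throughout the interval is preserved, so this translation is well defined.

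First I would record the two structural facts that make each interval inert under deferral. Since $T_{\text{expiration}}$ contains \emph{every} transaction's deadline, no transaction expires strictly inside $(\tau^{(i)}, \tau^{(i+1)})$; hence any transaction that $\pi$ executes or drops at some $t$ in this interval has deadline $\geq \tau^{(i+1)}$ and is still pending at $\tau^{(i+1)-}$, so $\tilde\pi$ may legally act on it there. Moreover, arrivals only insert transactions into buffers and never change balances, so every transaction $\pi$ touches in the interval has already arrived by $\tau^{(i+1)-}$ and is available to $\tilde\pi$.

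The inductive step is where admissibility (feasibility) must be verified, and this is the main obstacle. Assume $\tilde x(\tau^{(i)}) = x(\tau^{(i)})$ and $\tilde R(\tau^{(i)}) = R(\tau^{(i)})$. Under $\tilde\pi$ the balances stay frozen at their $\tau^{(i)}$ value throughout the open interval, because $\tilde\pi$ idles; under $\pi$ the balances start from the \emph{same} value by the inductive hypothesis and evolve through $\pi$'s actions. The key observation is that feasibility of an execution depends only on the instantaneous balance of the sending side and the transaction amount, while drops never alter balances. Consequently, applying $\pi$'s action sequence at $\tau^{(i+1)-}$ starting from the frozen balance retraces exactly the balance trajectory that $\pi$ followed inside the interval, so each action is feasible for $\tilde\pi$ at the moment it is applied precisely when it was feasible for $\pi$; thus $\tilde\pi \in \Pi$. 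I would then close the step by a counting check: by time $\tau^{(i+1)}$ both policies have executed the same set of transactions (so balances agree), dropped the same set (so cumulative blockage agrees), and observed the same arrivals. Since buffer content equals arrivals minus executed minus dropped minus expired, and all four sets coincide at $\tau^{(i+1)}$, we obtain $\tilde x(\tau^{(i+1)}) = x(\tau^{(i+1)})$ and $\tilde R(\tau^{(i+1)}) = R(\tau^{(i+1)})$, completing the induction.

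I expect the remaining delicate points to be purely bookkeeping: the reindexing of position-based actions under the changing buffer order, which I handle by carrying transaction identities and invoking the common decay rate of remaining times; and the treatment of actions $\pi$ takes exactly at $\tau^{(i+1)}$, which I fold into the single batch applied at $\tau^{(i+1)-}$ so that the ``acts only at expirations'' requirement holds. Finally, I would remark that $\tilde R$ may differ from $R$ at non-expiration instants, since a deferred drop is counted later, but this is outside the scope of the claim, which is quantified only over $\tau \in T_{\text{expiration}}$.
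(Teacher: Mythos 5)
Your proof is correct and takes essentially the same approach as the paper's: defer every action of $\pi$ within each inter-expiration interval to the next expiration epoch, apply the deferred actions in the same order, and induct over the expiration times, noting that states and blockages then coincide at each $\tau \in T_{\text{expiration}}$. The paper compresses your careful feasibility argument (that replaying the same action sequence from the same frozen balances retraces the same balance trajectory) and the identity-based bookkeeping into the single remark ``this is possible, since $\tau_1$ is the first expiration time,'' so your write-up is just a more detailed rendering of the identical idea.
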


\begin{proof}

Let $\pi \in \Pi$ be an arbitrary policy that during the interval $[0,\tau_1]$ drops certain transactions and processes certain other transactions in some specific order.
We define another policy that takes no action during $[0,\tau_1)$ and at $\tau_1$ processes and drops the same transactions that $\pi$ has processed and dropped respectively during $[0,\tau_1]$, in exactly the same order.
This is possible, since $\tau_1$ is the first expiration time.
Thus, at $\tau_1$ we have that the states (balances and buffer contents) and blockages under $\pi$ and $\Tilde{\pi}$ are exactly the same.
Now, defining $\Tilde{\pi}$ analogously and applying the same argument on the intervals $(\tau_1, \tau_2], (\tau_2, \tau_3], \dots$ inductively proves the lemma.
\end{proof}

To prove Theorem \ref{thm:rejection-optimality}, we will also need the following lemma.

\begin{lemma}
\label{lemma:channel-with-buffers}
For every policy $\pi \in \Pi^{DE}$, there exists a policy $\Tilde{\pi} \in \Pi^{DE}$ that acts similarly to PMDE at $t = \tau_1$ and is such that when the system is in state $x(0)$ at $t=0$ and policies $\pi$ and $\Tilde{\pi}$ act on it, the corresponding total rejected amount processes $R$ and $\Tilde{R}$ can be constructed via an appropriate coupling of the arrival processes so that
\begin{equation}
\label{eqn:lemma-with-buffers-main-blockage}
\Tilde{R}(t) \leq R(t), t \in \tau_1, \tau_2, \dots
\end{equation}
\end{lemma}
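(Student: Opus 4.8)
The plan is to prove the lemma by an interchange argument that alters $\pi$ only at the first deadline expiration $\tau_1$ and then shadows it thereafter, maintaining a coupling under which $\tilde{R}$ never exceeds $R$. First I would define $\tilde{\pi}$ to take at $\tau_1$ exactly the action prescribed by PMDE on the expiring transaction(s): if the expiring transaction $p$ (say from $A$, of the common amount $v$) is feasible, execute it; if it is infeasible but matchable, execute the shortest-deadline opposite transaction $q$ and then $p$; otherwise drop $p$. Since $\tau_1$ is the \emph{first} expiration, every non-expiring transaction that $\pi$ may process or drop at $\tau_1$ is still within its deadline, so $\tilde{\pi}$ is free to leave such a transaction in the buffer and act on it at a later expiration. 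This is what lets $\tilde{\pi}$ remain in $\Pi^{DE}$ while deviating from $\pi$ only in the scheduling of individual transactions, not in any forced infeasibility.

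Next I would compare the post-$\tau_1$ states under the two policies, organized by PMDE's three branches. Because all amounts equal $v$, each execution shifts $(Q^A,Q^B)$ by $\pm v$ and transactions of a given direction are exchangeable, so the relevant coupling is a relabeling that keeps the multiset of future arrivals (with their directions and deadlines) identical while letting $\tilde{\pi}$ reassign which specific transaction fills each role. In the matching branch, PMDE's choice of the shortest-deadline opposite transaction $q$ is the crucial ingredient: if $\pi$ instead matched $p$ with a longer-deadline $q'$, or declined to match and later let $q$ expire, then swapping the roles of $q$ and $q'$ leaves $\tilde{\pi}$ holding the longer-deadline transaction, which is at least as useful at every future instant. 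The net effect of each branch is a controlled perturbation: the balance gap $\delta(t)=Q^A_{\pi}(t)-Q^A_{\tilde{\pi}}(t)$ is forced into $\{-v,0,v\}$, the buffers differ by at most one pending transaction per side, and $\tilde{\pi}$ enters the interval after $\tau_1$ with a blockage lead $R(\tau_1)-\tilde{R}(\tau_1)\geq 0$.

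I would then run an induction over the subsequent expiration times $\tau_2,\tau_3,\dots$, propagating the invariant that $\delta(t)\in\{-v,0,v\}$, that the buffer discrepancy is at most one transaction per direction, and that $\tilde{R}(\tau_k)\leq R(\tau_k)$. At each $\tau_k$ I would read off $\pi$'s action and let $\tilde{\pi}$ shadow it whenever feasible; when the perturbation makes an expiring transaction feasible for one policy but not the other, $\tilde{\pi}$ repairs the step either by invoking a match on its surplus side (it always holds the complementary surplus, since $\delta$ and the excess balance point in opposite directions), or by absorbing the discrepancy into its existing lead — in which case $\delta$ is driven back toward $0$ and the lead is spent by exactly $v$. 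Each repair either preserves $\delta$ or returns it toward $0$ and never reverses the blockage inequality, so the invariant reappears at $\tau_k^{+}$.

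The main obstacle is precisely this feasibility mismatch in the inductive step: the deviation at $\tau_1$ leaves $\tilde{\pi}$ with $v$ less on one side and $v$ more on the other, so there will be expirations at which $\pi$ executes a transaction directly while $\tilde{\pi}$ cannot. The heart of the argument is to show that in every such configuration $\tilde{\pi}$ has a compensating move — a feasible match using its surplus side, guaranteed available because the shortest-deadline matching preserved the more flexible pending transactions — and that the joint bookkeeping of $\delta$ and the blockage lead closes consistently across all sub-cases (expiring transaction from $A$ versus from $B$, crossed with the three PMDE branches at $\tau_k$). Checking that no sub-case forces $\tilde{R}(\tau_k)>R(\tau_k)$ and that $\delta$ never escapes $\{-v,0,v\}$ is the bulk of the work; once the invariant is established, \eqref{eqn:lemma-with-buffers-main-blockage} follows by reading it off at each $\tau_k$.
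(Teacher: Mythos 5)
Your proposal is correct and, at the level of strategy, it is the same as the paper's: make $\tilde{\pi}$ take PMDE's action at $\tau_1$, defer whatever $\pi$ did at $\tau_1$ to the next expiration (the paper's explicit ``two-phase'' catch-up at $\tau_2$), and then let $\tilde{\pi}$ shadow $\pi$ while carrying a bounded discrepancy --- balance gap in $\{-v,0,v\}$, at most one extra pending transaction, a blockage lead --- with forced drops paid out of the lead. The genuine difference is your shortest-deadline exchange, and it is not cosmetic: the paper's case analysis never uses the fact that PMDE matches with the \emph{shortest-deadline} opposite transaction, and precisely in the sub-case where $\pi$ also matches the expiring $p_1^A$ but with a different transaction $p_1^B$ while keeping $\tilde{p}_1^B$ pending, the paper's bookkeeping is off. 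There, the catch-up has $\tilde{\pi}$ execute all of $P_1^p$ (the transactions $\pi$ processed at $\tau_1$), including $p_1^B$; after that, both policies have dropped exactly the same set $P_1^d$, so the correct accounting is $\tilde{R}=R$, not $\tilde{R}=R-v$ as the paper states, and the subsequent forced drop produces $\tilde{R}=R+v$ at an intermediate expiration time. One can write down explicit arrival sequences (e.g.\ $C=2v$, $Q^A(0)=0$, two extra $B$-to-$A$ arrivals expiring shortly after $\tau_1$) on which the paper's constructed $\tilde{\pi}$ strictly violates the claimed inequality before it is later restored. Your swap is exactly the repair: at the catch-up, $\tilde{\pi}$ should keep $p_1^B$ pending as a stand-in for $\tilde{p}_1^B$ rather than execute it; since $\tilde{p}_1^B$ had the shortest deadline, $p_1^B$ expires no earlier, so every later action of $\pi$ on $\tilde{p}_1^B$ can be mirrored by $\tilde{\pi}$ on $p_1^B$, and balances, buffers (up to this deadline-dominated swap) and blockages stay equal with no lead needed. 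One point you must tighten when writing the induction: your stated invariant ``blockage lead $\geq 0$'' is too weak, because the absorb-repair spends exactly $v$; the invariant has to couple discrepancy type to lead size --- a balance gap of $v$ may only ever coexist with a lead of at least $v$ (such gaps arise exactly when one policy drops a transaction the other executed), while the deadline-dominated buffer swap requires no lead at all. With that strengthening, your plan closes in all sub-cases and in fact yields a sounder proof of the lemma than the one in the paper's appendix.
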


The proof idea is the following: 
We construct $\Tilde{\pi}$ and couple the blockage processes under $\pi$ and $\Tilde{\pi}$ and identical transaction arrival processes so that \eqref{eqn:lemma-with-buffers-main-blockage} holds.
First, we consider what policies $\pi$ and $\Tilde{\pi}$ might do at time $\tau_1$ of the first deadline expiration.
Then, for each possible combination, we couple $\Tilde{\pi}$ with $\pi$ in subsequent times so that at some point the states (balances and buffer contents) and the total blockages under $\pi$ and $\Tilde{\pi}$ coincide, and so that \eqref{eqn:lemma-with-buffers-main-blockage} is being satisfied at all these times.
From then on, we let the two policies move together.
The full proof is given in Appendix \ref{app:proof-of-main-lemma}.

Next, we present the proof of Theorem \ref{thm:rejection-optimality}.

\begin{proof}

The proof proceeds as follows: 
We first use Lemma \ref{lemma:DE-policies-suffice} to say that a blockage-minimizing policy among all policies in $\Pi$ exists in the class $\Pi^{DE}$.
We then repeatedly use Lemma \ref{lemma:channel-with-buffers} to construct a sequence of policies converging to the optimal policy.
Each element of the sequence matches the optimal policy at one more step each time, and is at least as good as any other policy until that point in time. 
Having acquired this sequence of policies that gradually tends to the proposed optimal policy, we can inductively show that the proposed optimal policy PMDE achieves higher throughput than any other policy.
A similar technique is used in sections IV and V of \cite{tinfo}.

From Lemma \ref{lemma:DE-policies-suffice}, we have that for any policy $\pi \in \Pi$, we can construct another policy $\pi' \in \Pi^{DE}$ such that for the corresponding total blockage processes $R^{\pi}$ and $R^{\pi'}$ we have $R^{\pi'}(t) \leq R^{\pi}(t)$, $t \in \mathbb{R}_{+}$.

From Lemma \ref{lemma:channel-with-buffers}, we have that given policy $\pi' \in \Pi^{DE}$, we can construct a policy $\pi_1 \in \Pi^{DE}$ that is similar to PMDE at $t=\tau_1$ and is such that for the corresponding total blockage processes $R^{\pi'}$ and $R^{\pi_1}$ we have $R^{\pi_1}(t) \leq R^{\pi'}(t)$, $t \in T_{\text{expiration}}$.

By repeating the construction, we can show that there exists a policy $\pi_2$ that agrees with $\pi_1$ at $t=\tau_1$, agrees with PMDE at $t=\tau_2$, and is such that for the corresponding total blockage processes we have $R^{\pi_2}(t) \leq R^{\pi_1}(t)$, $t \in T_{\text{expiration}}$.

If we repeat the argument $k$ times, we obtain policies $\pi_i$, $i=1,\dots,k$, such that policy $\pi_i$ agrees with PMDE up to and including $\tau_i$, and for the for the corresponding total blockage processes we have $R^{\pi_k}(t) \leq R^{\pi_{k-1}}(t) \leq \dots \leq R^{\pi_1}(t) \leq R^{\pi'}(t) \leq R^{\pi}(t)$, $t \in T_{\text{expiration}}$.

Taking the limit as $k \rightarrow \infty$:
\begin{align}
    \lim_{k \rightarrow \infty} R^{\pi_k}(t) &= R^{PMDE}(t) 
\end{align}
Therefore, $R^{PMDE}(t) \leq R^{\pi}(t)$, $t \in T_{\text{expiration}}$.
\end{proof}

Note that the proven optimality results hold independently of the capacity and initial balances.

\subsection{Special case: channel without buffers}
\label{sec:special-case-channel-without-buffers}

\subsubsection{Optimal policy for the channel without buffers}

The results of section \ref{sec:general-case-channel-with-buffers} apply also in the special case where either buffers are nonexistent (and therefore all transactions have to processed or dropped as soon as they arrive), or when all buffering times of arriving transactions are zero.
In this case, deadline expiration times are the same as arrival times, and policy PMDE becomes the following policy PFI (= Process Feasible Immediately):

\begin{algorithm}[ht]
\LinesNumbered
\SetKwFor{On}{on}{do}{}
\KwIn{channel state (balances only)}

\On{arrival of transaction $p_n^A$ at time $t_n^A$} {
    \If{$Q^A(t_n^{A-}) \geq v_n^A$}{
    execute $p_n^A$\;
    }
    \Else{
    drop $p_n^A$\;
    }
}
\caption{PFI scheduling policy (Process Feasible Immediately)}
\label{alg:PFI-policy}
\end{algorithm}

In words, upon transaction arrival, PFI executes the transaction immediately if it is feasible, and drops the transaction immediately if it is not feasible.
Formally, PFI takes action $(A,1,EX)$ at all times $t_n^A$, $n \in \mathbb{N}$, if $Q^A(t_n) \geq v_n^A$ and action $(A,1,DR)$ otherwise, action $(B,1,EX)$ at all times $t_n^B$, $n \in \mathbb{N}$, if $Q^B(t_n) \geq v_{t_n}^B$ and action $(B,1,DR)$ otherwise.

The following corollary states the analog of Theorem \ref{thm:rejection-optimality} and for the case of the channel without buffers.

\begin{corollary}
For a single channel without buffers under the assumption of fixed transaction amounts, policy PFI is optimal with respect to the total blockage: 
Let $R$ be the total rejected amount when the initial state is $x(0)$ and transaction are admitted according to a policy $\pi \in \Pi$, and $R^{PMDE}$ the corresponding process when PMDE is applied instead. Then, for any sample path of the arrival process, it holds
\begin{equation}
\label{eqn:sample-path-blockage-optimality-no-buffers}
    R^{PMDE}(t) \leq R^{\pi}(t) \text{ a.s. for all } t \in \mathbb{R}_{+}
\end{equation}

In addition, in this case the following also holds for any sample path of the arrival process:
\begin{equation}
\label{eqn:sample-path-throughput-optimality-no-buffers}
S^{PFI}(t) \geq S^{\pi}(t) \text{ a.s. for all } t \in \mathbb{R}_{+}
\end{equation}
\end{corollary}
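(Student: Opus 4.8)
The plan is to obtain the corollary as a degenerate instance of Theorem~\ref{thm:rejection-optimality} together with a conservation argument that is special to the zero-buffer regime. First I would formalize the reduction: setting $d_n^A = d_n^B = 0$ for every arriving transaction makes each expiration time coincide with the corresponding arrival time, so $T_{\text{expiration}} = T_{\text{arrival}}$. Under this identification the opposite side's buffer is always empty at an expiration instant (no transaction can ever linger, since it expires the moment it arrives), so the matching branch of PMDE is never triggered and PMDE collapses exactly to PFI: execute on arrival if feasible, drop on arrival otherwise. In particular $R^{PMDE} = R^{PFI}$ here, which reconciles the two notations used in the statement. Hence the blockage inequality \eqref{eqn:sample-path-blockage-optimality-no-buffers} is nothing but Theorem~\ref{thm:rejection-optimality} read off in this special case, and requires no new work beyond checking the reduction.

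For the throughput inequality \eqref{eqn:sample-path-throughput-optimality-no-buffers}, which is the genuinely new claim (recall that the analogous pathwise statement fails in the buffered case), I would exploit the fact that with zero deadlines no transaction is ever pending. Concretely, every arriving transaction is resolved---executed or automatically dropped---at its single arrival/expiration instant for every admissible policy $\pi$, so $P^{\pi}(t) = 0$ for all $t$ and all $\pi \in \Pi$. Substituting this into the conservation identity \eqref{eqn:equivalance-of-S-and-R-1} gives $S^{\pi}(t) = V_{\text{total}}(t) - R^{\pi}(t)$, where $V_{\text{total}}(t)$ depends only on the (policy-independent) arrival process. Thus, on the zero-buffer model, minimizing blockage and maximizing throughput are pathwise equivalent, and combining this identity applied to $\pi$ and to PFI with the already-established \eqref{eqn:sample-path-blockage-optimality-no-buffers} yields $S^{PFI}(t) = V_{\text{total}}(t) - R^{PFI}(t) \geq V_{\text{total}}(t) - R^{\pi}(t) = S^{\pi}(t)$ for every $t$.

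The main obstacle---or rather the one point deserving care---is justifying $P^{\pi}(t) = 0$ rigorously, and thereby pinning down \emph{why} per-instant throughput optimality holds here but not in the general buffered setting. I would argue from the admissibility constraint $u(t) \in U(x(t))$ together with the convention that an unprocessed transaction is included in the dropped set at its expiration: with $d_n = 0$ a policy has no opportunity to defer, so idling at an arrival is indistinguishable from an immediate drop, forcing the pending amount to stay identically zero for all $\pi$. Once this is in place both inequalities follow mechanically, and I would close by noting that, exactly as for Theorem~\ref{thm:rejection-optimality}, the argument is insensitive to the channel capacity and the initial balances.
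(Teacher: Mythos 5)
Your proposal is correct and follows essentially the same route as the paper: the blockage inequality is obtained by specializing Theorem~\ref{thm:rejection-optimality} to zero buffering times (under which PMDE collapses to PFI), and the throughput inequality then follows from the conservation identity \eqref{eqn:equivalance-of-S-and-R-1} because the pending amount $P^{\pi}(t)$ vanishes identically for every admissible policy in the bufferless regime. Your extra care in justifying $P^{\pi}(t)=0$ from the admissibility constraint is a welcome elaboration of a step the paper leaves implicit, but it does not change the argument.
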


Equation \eqref{eqn:sample-path-throughput-optimality-no-buffers} is a direct consequence of \eqref{eqn:sample-path-blockage-optimality-no-buffers} and \eqref{eqn:equivalance-of-S-and-R-1}, as in this case the pending transaction amount is always zero.

\subsubsection{Analytical calculation of optimal success rate and throughput for the channel without buffers}

For a channel without buffers, if the arrivals follow a Poisson process, we can calculate the optimal success rate and throughput as the ones we get by applying the optimal policy PFI.

\begin{theorem}
\label{thm:analytical-calculation-of-success-rate}
For a single channel between nodes $A$ and $B$ with capacity $C$, and Poisson transaction arrivals with rates $\lambda_A \neq \lambda_B$ and fixed amounts equal to $v$, the maximum possible success rate of the channel is
\begin{equation}
SR_{\text{opt}} = \lambda_A \left( 1 - \frac{\lambda_B/\lambda_A - 1}{(\lambda_B/\lambda_A)^{\Tilde{C}+1} - 1} \right) 
+ \lambda_B \left( 1 - \left(\frac{\lambda_B}{\lambda_A}\right)^{\Tilde{C}} \frac{\lambda_B/\lambda_A - 1}{(\lambda_B/\lambda_A)^{\Tilde{C}+1} - 1} \right)
\end{equation}
where $\Tilde{C} = \lfloor \frac{C}{v} \rfloor$.\\

When $\lambda_A = \lambda_B = \lambda$, the maximum possible success rate is
\begin{equation}
SR_{\text{opt}} = \frac{2\lambda \Tilde{C}}{\Tilde{C}+1}
\end{equation}
\end{theorem}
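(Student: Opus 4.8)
The plan is to recognize that, under the optimal policy PFI with all transaction amounts equal to $v$, the channel balance evolves as a finite birth--death Markov chain, to compute its stationary distribution, and then to read off the success rate via PASTA (Poisson Arrivals See Time Averages). First I would normalize by $v$ and track the single state variable $n(t) = Q^A(t)/v \in \{0,1,\dots,\Tilde{C}\}$. Under PFI an $A$-to-$B$ arrival (rate $\lambda_A$) is executed exactly when $n \geq 1$, decrementing $n$ by one, while a $B$-to-$A$ arrival (rate $\lambda_B$) is executed exactly when $n \leq \Tilde{C}-1$, incrementing $n$ by one; infeasible arrivals act as self-loops and leave $n$ unchanged. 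Hence $\{n(t)\}$ is a continuous-time birth--death chain on $\{0,\dots,\Tilde{C}\}$ with up-rate $\lambda_B$ (at every state $<\Tilde{C}$) and down-rate $\lambda_A$ (at every state $>0$). Because $\lambda_A,\lambda_B>0$, the chain is irreducible on a finite state space, hence ergodic with a unique stationary distribution, and the long-run success rate is independent of the (unspecified) initial balance.

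Next I would solve the detailed-balance equations $\lambda_B\,\pi_n = \lambda_A\,\pi_{n+1}$, which give the truncated geometric form $\pi_n = \pi_0\,\rho^n$ with $\rho \triangleq \lambda_B/\lambda_A$. Normalizing over $n=0,\dots,\Tilde{C}$ (summing a finite geometric series) yields $\pi_0 = (\rho-1)/(\rho^{\Tilde{C}+1}-1)$ and therefore $\pi_{\Tilde{C}} = \rho^{\Tilde{C}}(\rho-1)/(\rho^{\Tilde{C}+1}-1)$ whenever $\rho \neq 1$.

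Finally, by PASTA the fraction of $A$-to-$B$ arrivals that find $n\geq 1$ (and thus succeed) equals $1-\pi_0$, and the fraction of $B$-to-$A$ arrivals that find $n\leq \Tilde{C}-1$ equals $1-\pi_{\Tilde{C}}$. Since successful $A$-to-$B$ transactions accrue at rate $\lambda_A(1-\pi_0)$ and successful $B$-to-$A$ transactions at rate $\lambda_B(1-\pi_{\Tilde{C}})$, the total success rate is
\[
SR_{\text{opt}} = \lambda_A(1-\pi_0) + \lambda_B(1-\pi_{\Tilde{C}}),
\]
which, after substituting $\rho = \lambda_B/\lambda_A$, is exactly the claimed expression. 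The equal-rate case $\lambda_A=\lambda_B=\lambda$ then follows either by taking the limit $\rho\to 1$ (l'H\^opital applied to $(\rho-1)/(\rho^{\Tilde{C}+1}-1)$) or directly: detailed balance forces the uniform distribution $\pi_n = 1/(\Tilde{C}+1)$, so $\pi_0 = \pi_{\Tilde{C}} = 1/(\Tilde{C}+1)$ and $SR_{\text{opt}} = 2\lambda\,\Tilde{C}/(\Tilde{C}+1)$.

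I expect the main obstacle to be conceptual rather than computational: correctly matching feasibility to the chain boundaries (which direction raises versus lowers $n$, and the fact that blocked arrivals are self-loops that do not alter the stationary distribution), and justifying that the quantity of interest is the stationary rate. The latter requires invoking ergodicity of the finite irreducible chain together with PASTA, in order to equate the arrival-averaged feasibility probabilities with the time-stationary boundary masses $\pi_0$ and $\pi_{\Tilde{C}}$. The remaining work — summing the finite geometric series and algebraic simplification — is routine.
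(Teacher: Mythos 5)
Your proposal is correct and follows essentially the same route as the paper's own proof: model node $A$'s balance under PFI as a birth--death chain on $\{0,\dots,\Tilde{C}\}$ with up-rate $\lambda_B$ and down-rate $\lambda_A$, solve the local balance equations for the truncated geometric stationary distribution, and obtain $SR_{\text{opt}} = \lambda_A(1-\pi_0) + \lambda_B(1-\pi_{\Tilde{C}})$. Your writeup is in fact slightly more careful than the paper's, since you explicitly invoke ergodicity and PASTA to equate arrival-averaged blocking probabilities with the stationary boundary masses, a step the paper leaves implicit.
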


\noindent
A proof of this result is given in Appendix \ref{app:proof-of-analytical-calculation}.
The maximum possible normalized throughput is $S \cdot v$.

\section{Heuristic policies for general amount distributions}
\label{sec:heuristic-policies}

So far, we have described our PMDE policy and proved its optimality for a channel with or without buffering capabilities in the case of fixed arriving transaction amounts.
PMDE could also serve though the more general case of arbitrary amounts if payment splitting is used.
Indeed, there have been proposals (e.g., \cite{Sivaraman2020}) that split payments into small chunks-packets and route or schedule them separately, possibly along different paths and at different times, utilizing Atomic Multipath Payments (AMP) \cite{AMP}.
Recall that it is guaranteed by the PCN's cryptographic functionality (the HTLC chaining) that a multihop payment will either complete or fail along all intermediate steps.
The additional constraint if AMP is employed is that some check should be performed to ensure that all chunks of a particular transaction are processed until their destination, or all chunks are dropped.
This could for example be checked when the transaction deadline expires, at which moment every node would cancel all transactions for which it has not received all chunks.
Therefore, this is one way to be able to apply PMDE to an arbitrary transaction amounts setting.

We also present a modified version of PMDE that does not require payment splitting and AMP, and is a heuristic extension of the policy that was proved optimal for fixed transaction amounts.
Since now a transaction of exactly the same amount as the expiring one is unlikely to exist and be the first in the opposite node's buffer, the idea is to modify the matching step of PMDE so that the entire buffer of the opposite side is scanned until enough opposite transactions are found so as to cover the deficit.
The buffer contents are sorted according to the criterion \textit{bufferDiscipline}, possible values for which are: oldest-transaction-first, youngest-transaction-first, closest-deadline-first, largest-amount-first, smallest-amount-first.
The modified policy PMDE is shown in Algorithm \ref{alg:PMDE-augmented-policy}, and is symmetric with respect to nodes A and B.

\begin{algorithm}[ht]
\LinesNumbered
\SetKwInOut{Parameter}{Parameters}
\SetKwFor{On}{on}{do}{}
\KwIn{channel state (balances and buffer contents)}
\Parameter{bufferDiscipline}

\On{arrival of transaction $p_n^A$ at time $t_n^A$} {
    add $p_n^A$ to A's buffer
}

\On{deadline expiration of transaction $p_n^A$ at time $\tau_n^A$} {
    \uIf{$p_n^A$ is in A's buffer at time $\tau_n^{A-}$}{
    \If{$Q^A(\tau_n^{A-}) \geq v_n^A$}{
            execute $p_n^A$\;
        }
        \uElseIf{$Q^A(\tau_n^{A-}) < v_n^A$ and $Q^B(\tau_n^{A-}) \geq v_n^A$ and $K^B(\tau_n^{A-}) \geq 1$}{
            deficit $\leftarrow Q^A(\tau_n^{A-}) - v_n^A$\;
            scan transactions in B's  buffer in order \textit{bufferDiscipline} and find the first set with total amount > deficit\;
            \If{such a set exists}{
            execute these transactions from $B$ to $A$\;
            execute $p_n^A$\;
            }
            \Else{
            drop $p_n^A$\;
            }
        }
        \Else{
            drop $p_n^A$\;
        }
    }
    \Else{
        idle
    }
}
\caption{Generalized PMDE scheduling policy}
\label{alg:PMDE-augmented-policy}
\end{algorithm}

We also evaluate another family of heuristic policies that sort the transactions in the buffer according to some criterion and process them in order, and which is shown in Alg. \ref{alg:heuristic-policies}.
The buffer might be shared between the two nodes (thus containing transactions in both directions), or separate.
At regular intervals (every \textit{checkInterval} seconds -- a design parameter), expired transactions are removed from the buffer. 
Then, the buffer contents are sorted according to the criterion \textit{bufferDiscipline}, and as many of them as possible are processed by performing a single linear scan of the sorted buffer.
The policies of this family are also parameterized by \textit{immediateProcessing}. 
If \textit{immediateProcessing} is true, when a new transaction arrives at a node, if it is feasible, it is processed immediately, and only otherwise added to the buffer, while if \textit{immediateProcessing} is false, all arriving transactions are added to the buffer regardless of feasibility.
The rationale behind non-immediate processing is that delaying processing might facilitate the execution of other transactions that otherwise would not be possible to process.

\begin{algorithm}[ht]
\LinesNumbered
\SetKwInOut{Parameter}{Parameters}
\SetKwFor{On}{on}{do}{}
\SetKwFor{Every}{every}{do}{}
\KwIn{channel state (balances and buffer contents)}
\Parameter{bufferDiscipline, immediateProcessing, checkInterval}

\On{arrival of transaction $p_n^A$ at time $t_n^A$} {
    \If{immediateProcessing = True and $Q^A(t_n^{A-}) \geq v_n^A$}{
        execute $p_n^A$
    }
    \Else{
        add $p_n^A$ to A's buffer
    }
}

\Every{checkInterval} {
    remove expired transactions from buffer\;
    sortedBuffer $\leftarrow$ sort(buffer, bufferDiscipline)\;
    \For{transaction $p \in \text{sortedBuffer}$}{
        \If{$p$ is feasible}{
            execute $p$\;
        }
    }
}
\caption{PRI scheduling policy (Process at Regular Intervals)}
\label{alg:heuristic-policies}
\end{algorithm}

An underlying assumption applying to all policies is that the time required for buffer processing is negligible compared to transaction interarrival times.
Thus, buffer processing is assumed effectively instantaneous.

\section{Evaluation}
\label{sec:evaluation-results}

\subsection{Simulator}
\label{sec:simulator}

In order to evaluate the performance of different scheduling policies, especially in the analytically intractable case of arbitrary transaction amounts, we built a discrete event simulator of a single payment channel with buffer support using Python SimPy \cite{simpy}. 
Discrete Event Simulation, as opposed to manual manipulation of time, has the advantages that transaction arrival and processing occur as events according to user-defined distributions, and the channel is a shared resource that only one transaction can access at a time.
Therefore, such a discrete event simulator captures a real system's concurrency and randomness more realistically.
The simulator allows for parameterization with respect to the initial channel balances, the transaction generation distributions (frequency, amount, and maximum buffering time) for both sides of the channel, and the total transactions to be simulated.
The channel has two buffers attached to it that operate according to the scheduling policy being evaluated.
The code of our simulator 
will be open-sourced.

\subsection{Experimental setup}
\label{sec:experimental-setup}

\begin{wrapfigure}{r}{0.3\textwidth}
    \centering
    \includegraphics[width=0.3\textwidth]{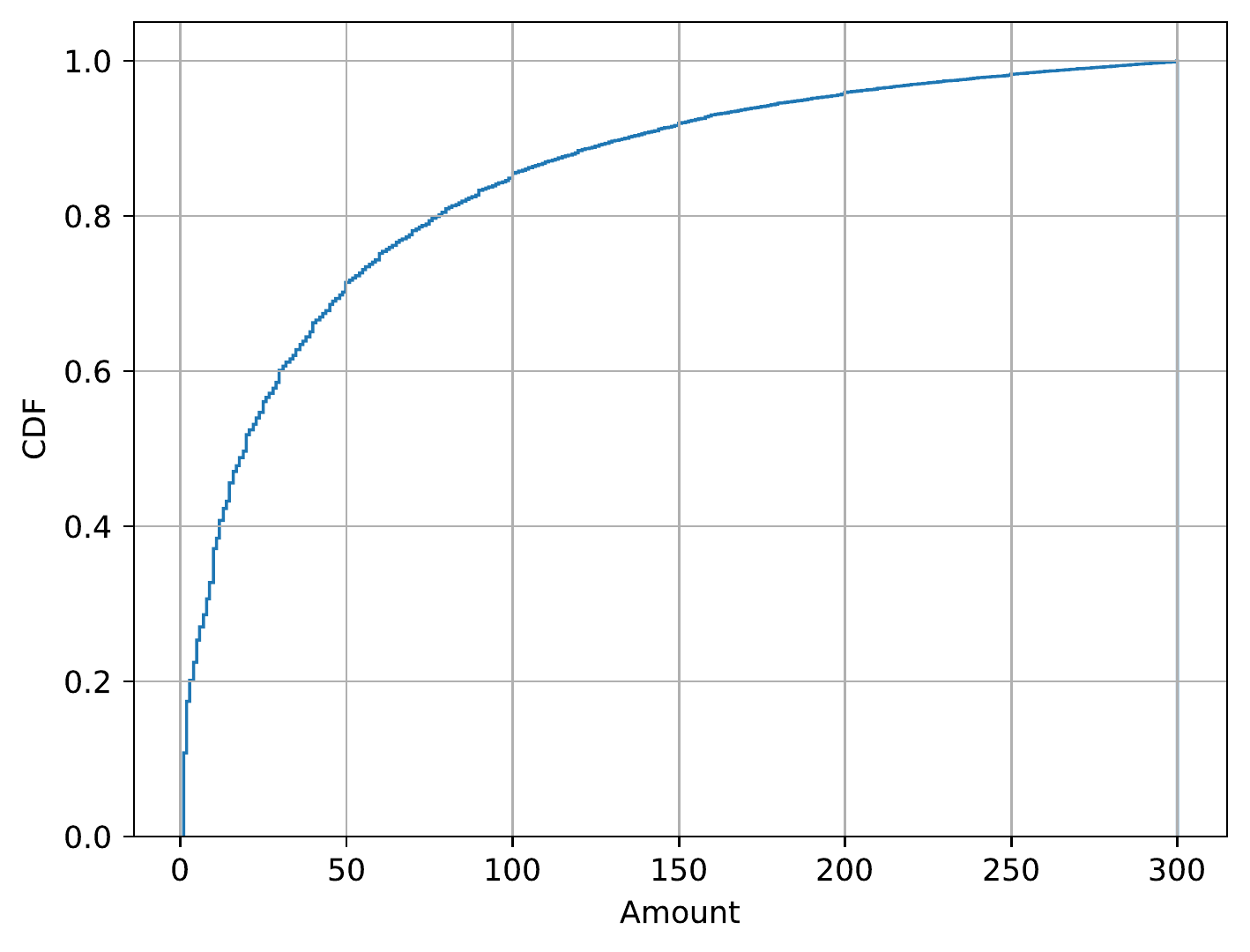}

    \caption{Cumulative Distribution Function of the amounts used from the credit card transaction dataset \cite{kaggle-dataset}.}
    \label{fig:dataset_cdf}
\end{wrapfigure}

\subsubsection{Optimal policy for arbitrary amounts}

We simulate a payment channel between nodes 0 and 1 with a capacity of 300 with initial balances of 0 and 300 respectively\footnote{In practice, Lightning channels are usually single-funded initially \cite{Pickhardt2019}.}.
Transactions are arriving from both sides according to Poisson distributions.
We evaluate policies PMDE and PRI defined in section \ref{sec:heuristic-policies}, with (PRI-IP) or without immediate processing (PRI-NIP), for all 5 buffer disciplines (so 15 policies in total), and when both nodes have buffering capabilities (with and without shared knowledge of the contents), or only one node, or none.
Each experiment is run for around 1500 seconds, and in the results only the transactions that arrived in the middle 80\% of the total simulation time are accounted for, so that the steady-state behavior of the channel is captured.
Unless otherwise stated, we present our results when using the oldest-transaction-first buffer discipline, and the \textit{checkInterval} parameter used by the PRI policies is set to 3 seconds.
For studying the general amounts case, we use synthetic data from Gaussian and uniform distributions, as well as an empirical distribution drawn from credit card transaction data.
The dataset we used is \cite{kaggle-dataset} (also used in \cite{Sivaraman2020}) and contains transactions labeled as fraudulent and non-fraudulent.
We keep only the latter, and from those we sample uniformly at random among the ones that are of size less than the capacity.
The final distribution we draw from is shown in Figure \ref{fig:dataset_cdf}.
Finally, since our simulations involve randomness, we run each experiment for a certain configuration of the non-random parameters 10 times and average the results.
The error bars in all graphs denote the minimum and maximum result values across all runs of the experiment.

\subsection{Results}

\subsubsection{Optimal policy for fixed amounts and symmetric/asymmetric demand}

We first simulate a symmetric workload for the channel of 500 transactions on each side with Poisson parameters equal to 3 (on average 1 transaction every 3 seconds), fixed amounts equal to 50, and a shared buffer between the nodes.
The buffering time for all transactions is drawn from a uniform distribution between 0 and a maximum value, and we vary this maximum value across experiments to be 1, 2,..., 10, 20, 30,..., 120 seconds.  

\begin{figure}
    \centering
    \subfigure[Symmetric demand]{
        \includegraphics[width=0.22\textwidth]{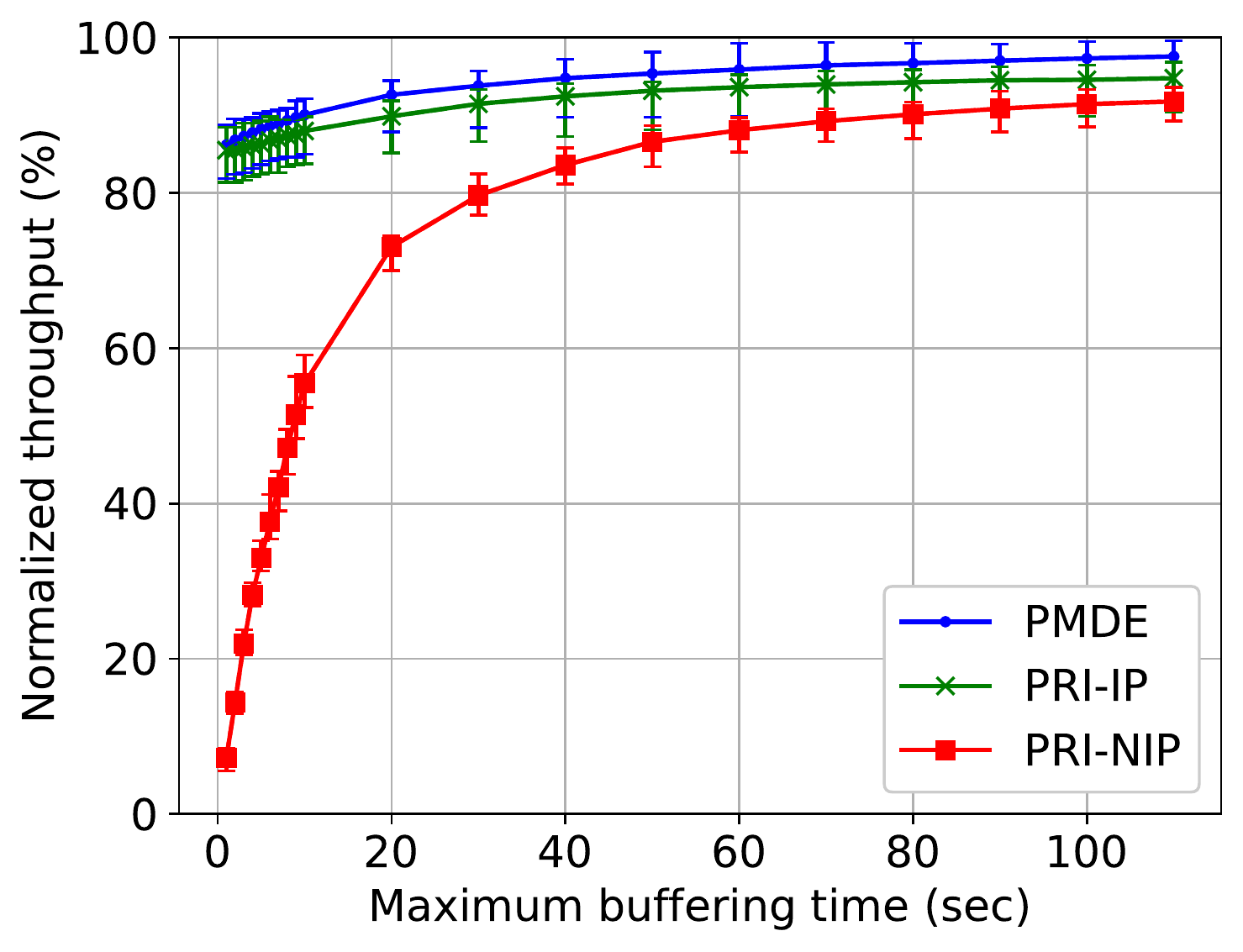}
        \label{fig:results_101_1}
    }
    \subfigure[Asymmetric demand, total throughput]{
        \includegraphics[width=0.22\textwidth]{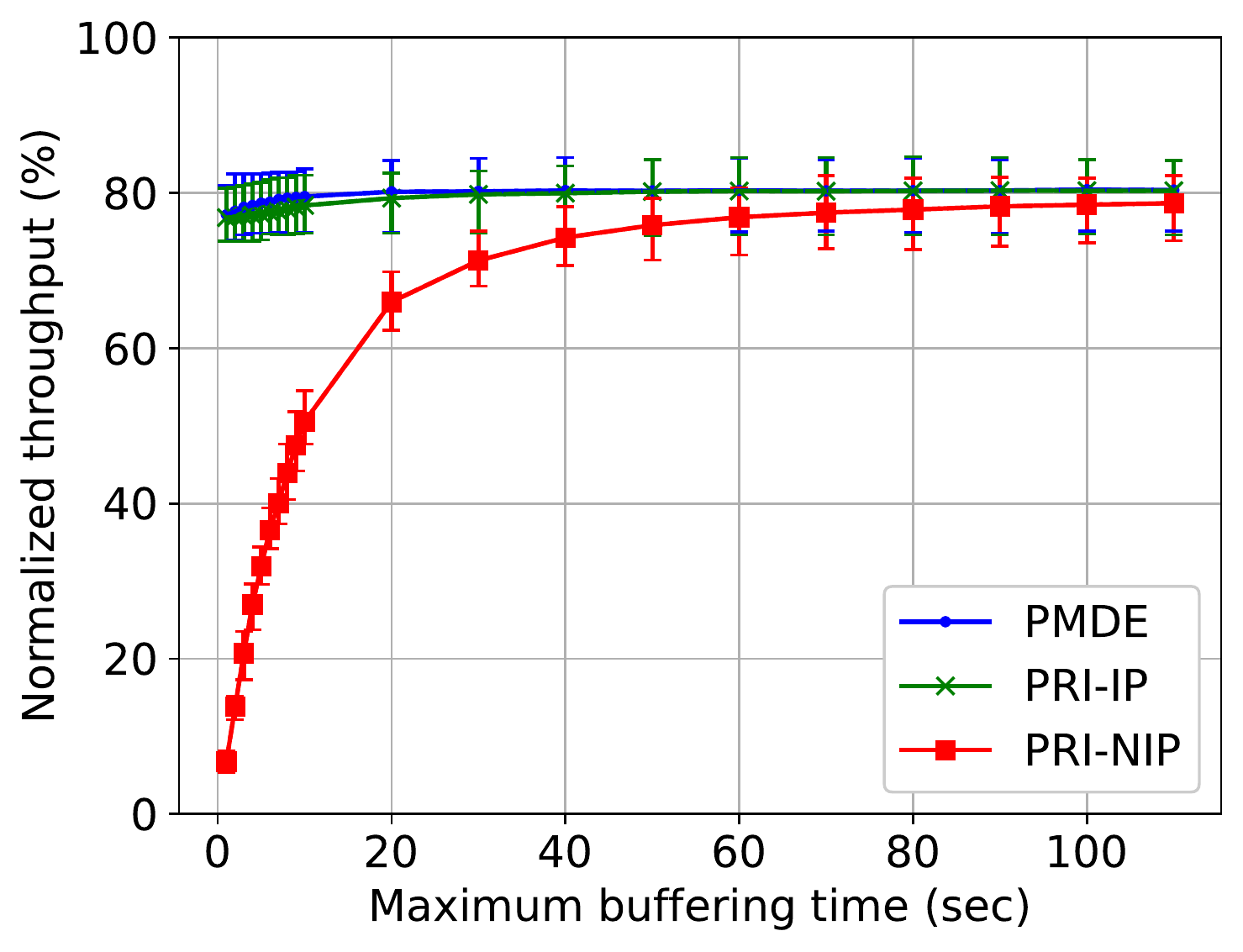}
        \label{fig:results_102_1}
    }
    \subfigure[Asymmetric demand, per node throughput]{
        \includegraphics[width=0.22\textwidth]{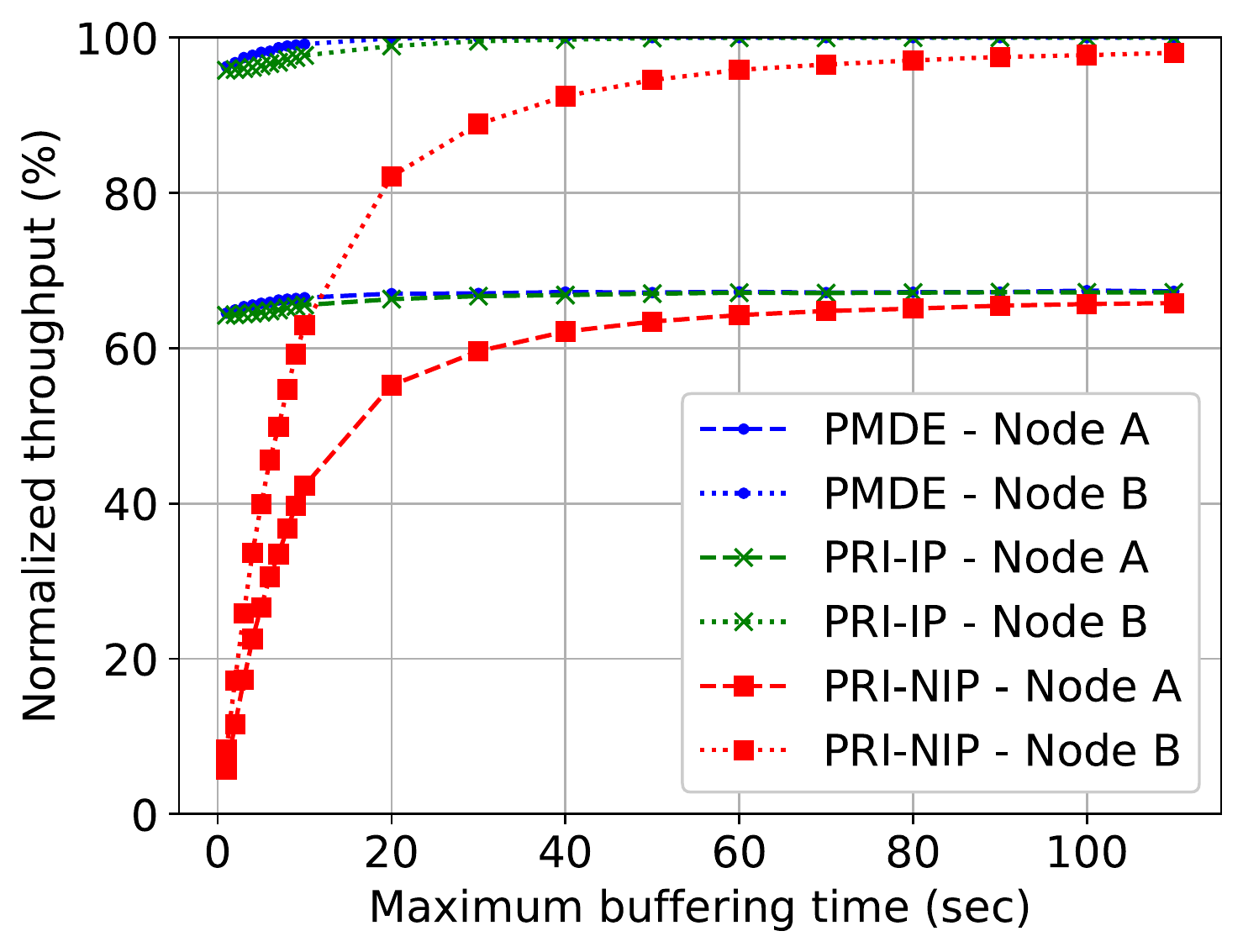}
        \label{fig:results_102_1_nodes}
    }
    \subfigure[Asymmetric demand, Number of sacrificed transactions]{
        \includegraphics[width=0.22\textwidth]{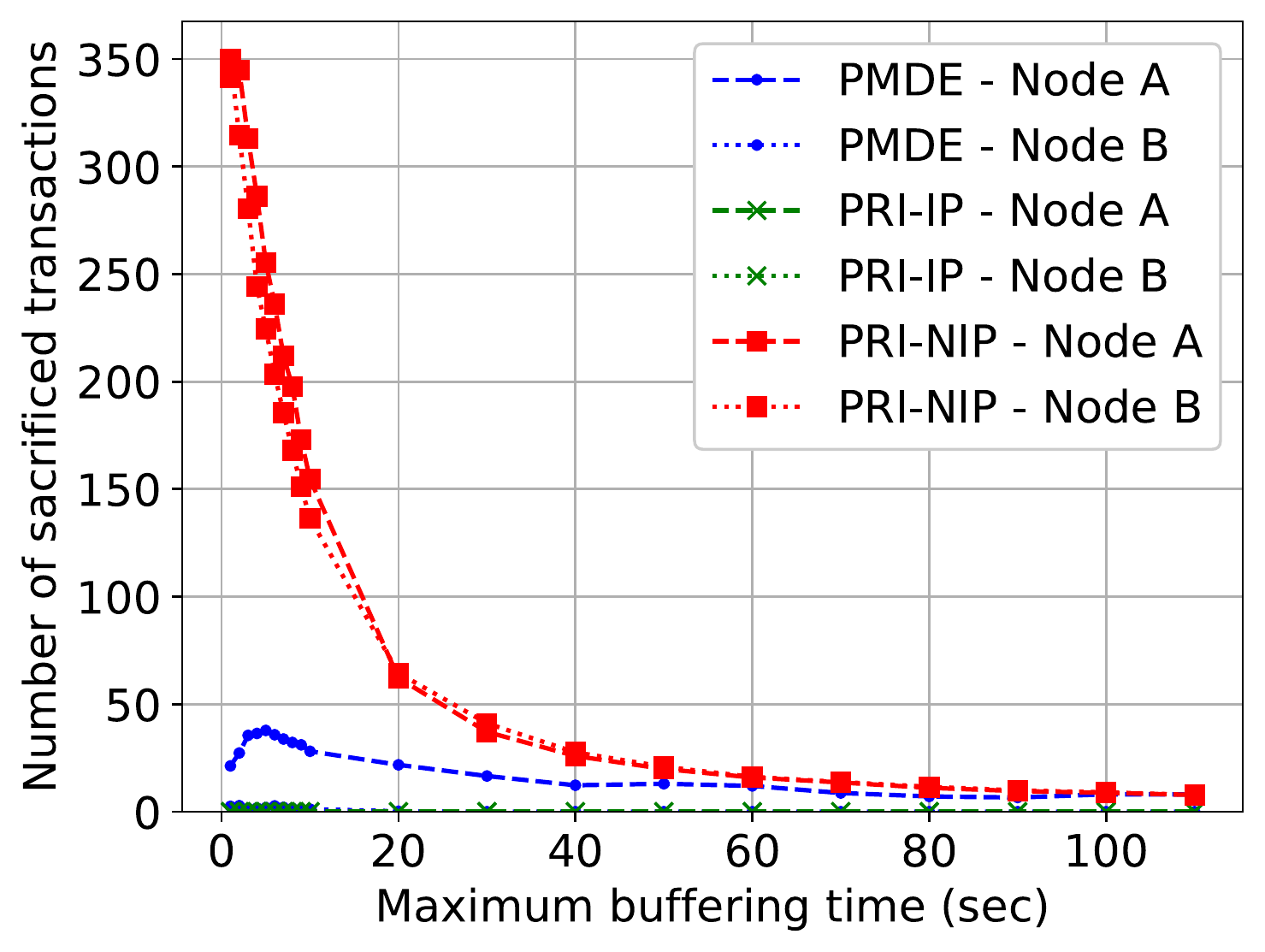}
        \label{fig:results_102_1_sac}
    }

    \caption{Total channel throughput and number of sacrificed transactions as a function of maximum buffering time for different scheduling policies, for oldest-first buffer discipline}
    \label{fig:results_101_102}
\end{figure}

We plot the behavior of the total channel throughput (proportional to the success rate because of fixed amounts) for a single channel for different experiments with increasing maximum buffering time (Figure \ref{fig:results_101_1}).
The figures for the other disciplines are very similar.
Indeed, PMDE performs better than the heuristic PRI policies, as expected.
We also observe for all policies the desired behavior of \textit{increasing throughput with increasing buffering time}.
Moreover, we observe a \textit{diminishing returns behavior}.

We next consider the effects of asymmetry in the payment demand: we modify the setup of the previous section so that now 750 transactions arrive at node $A$ every 2 seconds on average (and 500 every 3 seconds at node $B$).
The results are shown in Figure \ref{fig:results_102_1}.
In this asymmetric demand case,  as expected, the throughput is overall lower compared to the symmetric case, since many transactions from the side with the higher demand do not find enough balance in the channel to become feasible.
Figure \ref{fig:results_102_1_nodes} shows separately the throughput for each of the nodes.
We observe again that buffering is helpful for both nodes, more so for node $B$ though, which was burdened with a smaller load and achieves higher throughput than node $A$.
It is also interesting that the number of \textit{sacrificed} transactions (i.e. that were feasible on arrival but entered the buffer and were eventually dropped) shown in Figure \ref{fig:results_102_1_sac} is small for PMDE compared to PRI-NIP (and trivially 0 for PRI-IP).

Nevertheless, in both the symmetric and the asymmetric cases, we generally observe what we would expect: that the channel equipped with a buffer (denoted by a non-zero maximum buffering time in the figures) performs at least as good as the channel without a buffer (i.e. with a maximum buffering time equal to 0 in the figures). 

The immediate processing version of PRI leads to slightly better throughput for large buffering times.
The difference between PRI-IP and PRI-NIP is more pronounced for small maximum buffering time values on the horizontal axis, because of the \textit{checkInterval} parameter (set to 3 seconds): for small buffering times, all or most transactions have an allowed time in the buffer of a few seconds, so none, one, or very few chances of being considered every 3 seconds.
The conclusion is that the benefit PRI can reap from holding feasible incoming transactions in the buffer instead of processing them right away is not worth the cost in this case, as processing them immediately leads to higher overall throughput.

\subsubsection{Optimal policy for arbitrary amounts}

We now evaluate our policies on scenarios with symmetric demand when the transaction amounts follow some non-constant distribution.
Specifically, we use a Gaussian distribution of mean 100 and variance 50 (truncated at the channel capacity of 300), a uniform distribution in the interval [0, capacity], and the empirical distribution from the credit card transaction dataset.

We first examine the role of the buffer discipline.
Figure \ref{fig:results_105} shows all the policies for all 5 disciplines for the empirical dataset.
The figures when using the Gaussian or uniform amounts are similar.
We observe similar results for different buffer disciplines, with PMDE performing best for small and medium maximum buffering times, and PRI-PI performing best for large maximum buffering times.
This is likely due to the fact that PRI-IP offers each transaction multiple chances to be executed (every \textit{checkInterval}), unlike PMDE that offers only one chance.
The higher the maximum buffering time, the more chances transactions get, leading to the higher throughput of PRI.
Since the results are quite similar for different disciplines, in the rest of the figures we adopt the \textit{oldest-first} discipline, which additionally incorporates a notion of First-In-First-Out fairness for transactions.

\begin{figure}[h]
\t\centering
    \subfigure[Oldest first]{
        \includegraphics[width=0.18\textwidth]{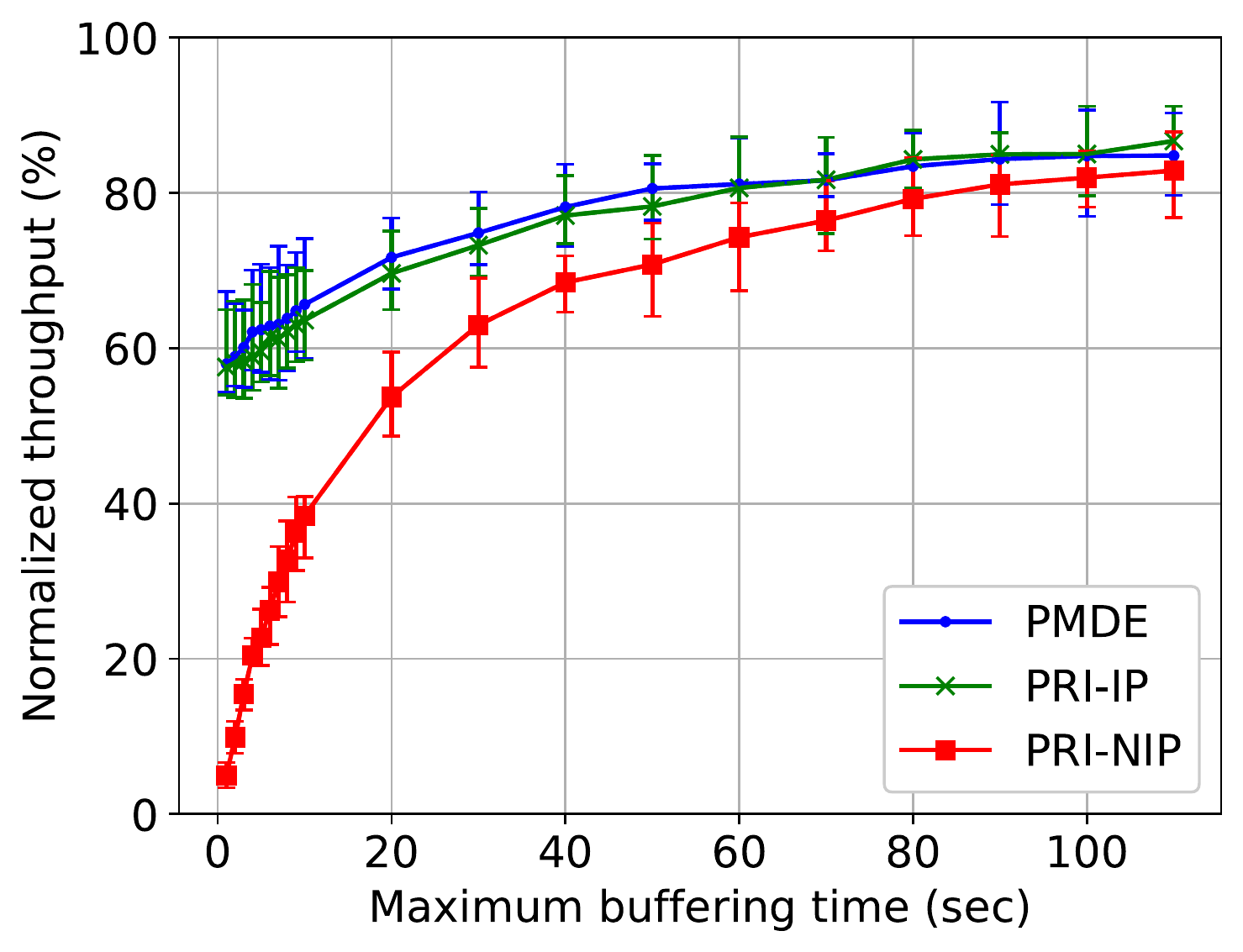}
        \label{fig:results_105_1}
    }
    \subfigure[Youngest first]{
        \includegraphics[width=0.18\textwidth]{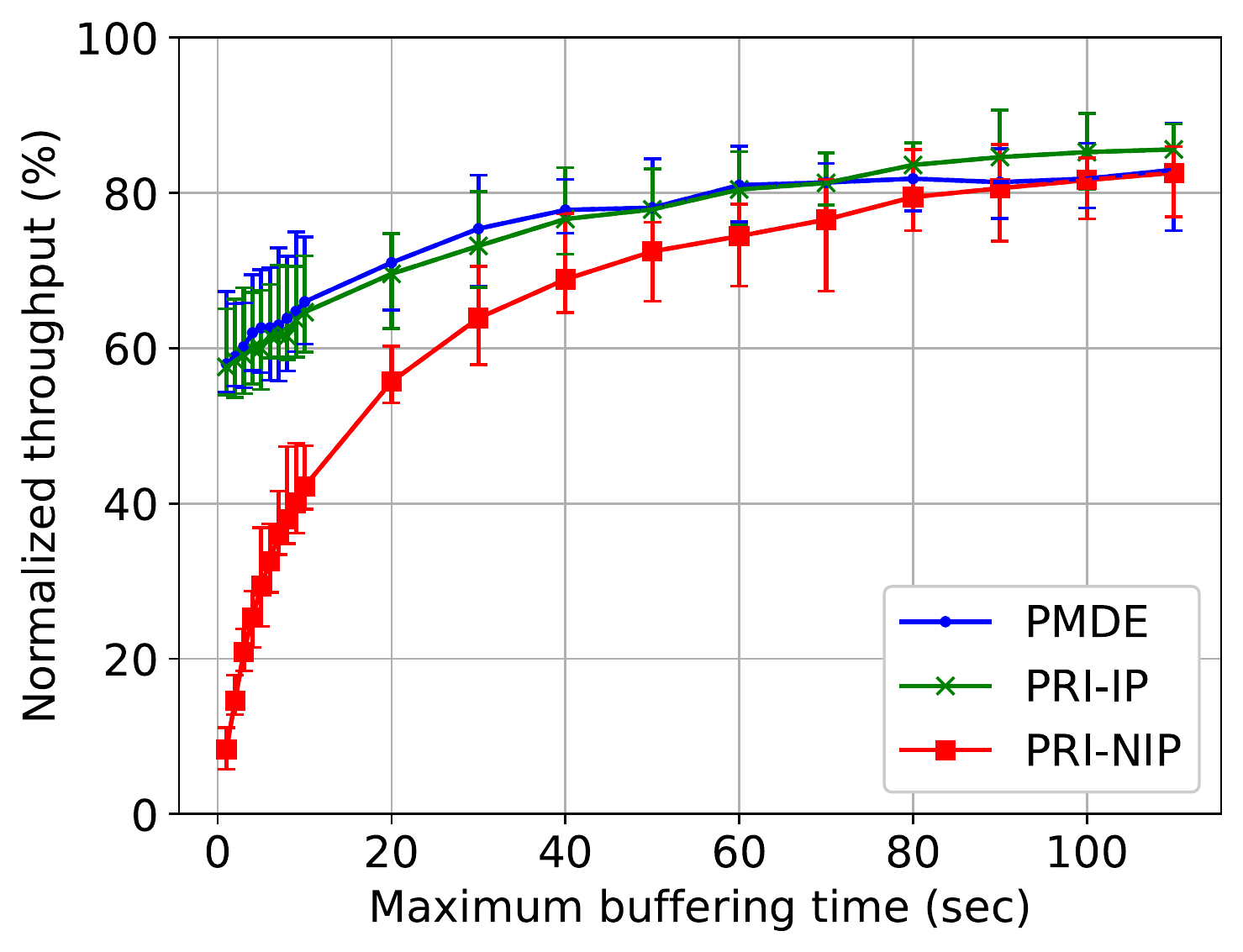}
        \label{fig:results_105_2}
    }
    \subfigure[Closest deadline first]{
        \includegraphics[width=0.18\textwidth]{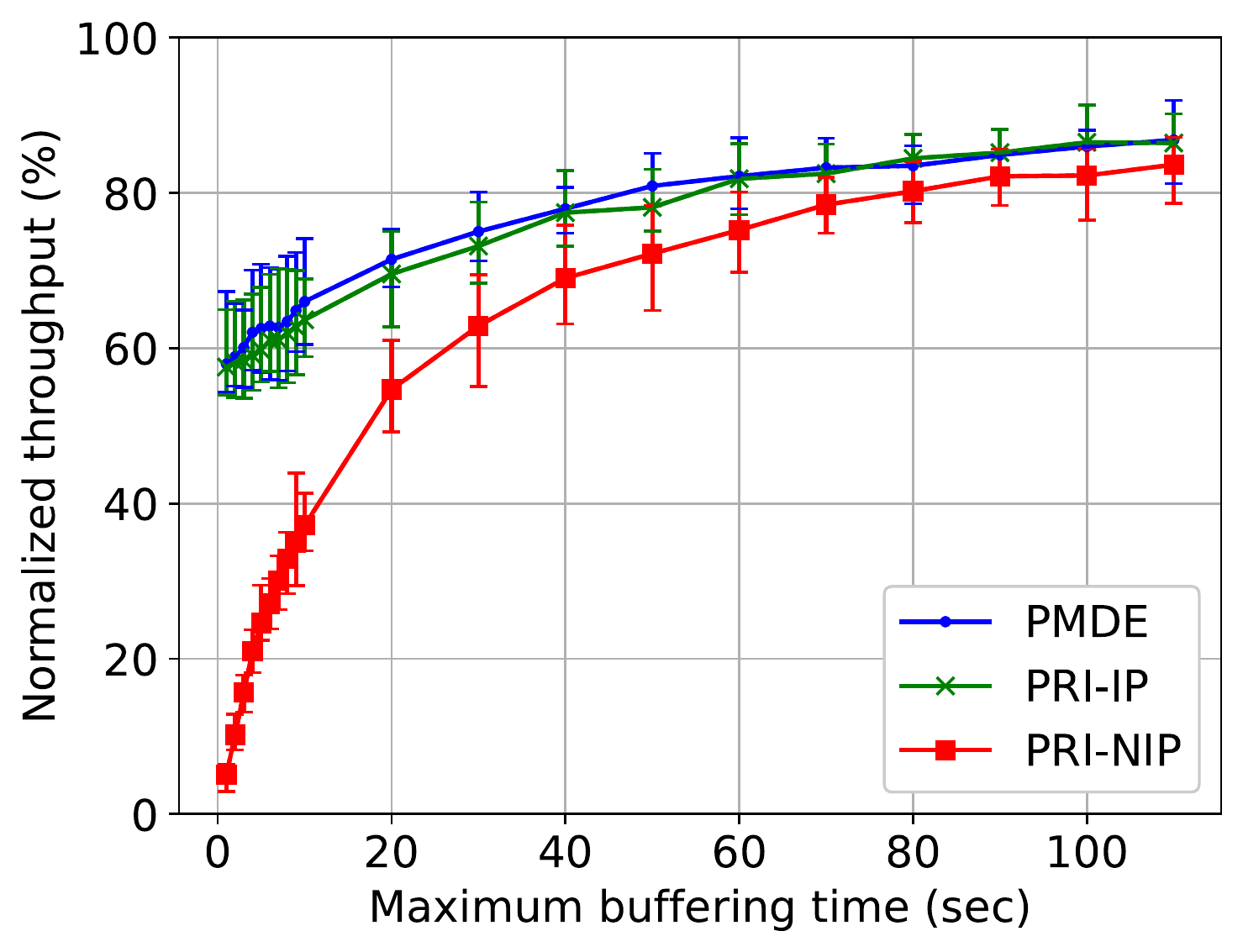}
        \label{fig:results_105_3}
    }
    \subfigure[Largest amount first]{
        \includegraphics[width=0.18\textwidth]{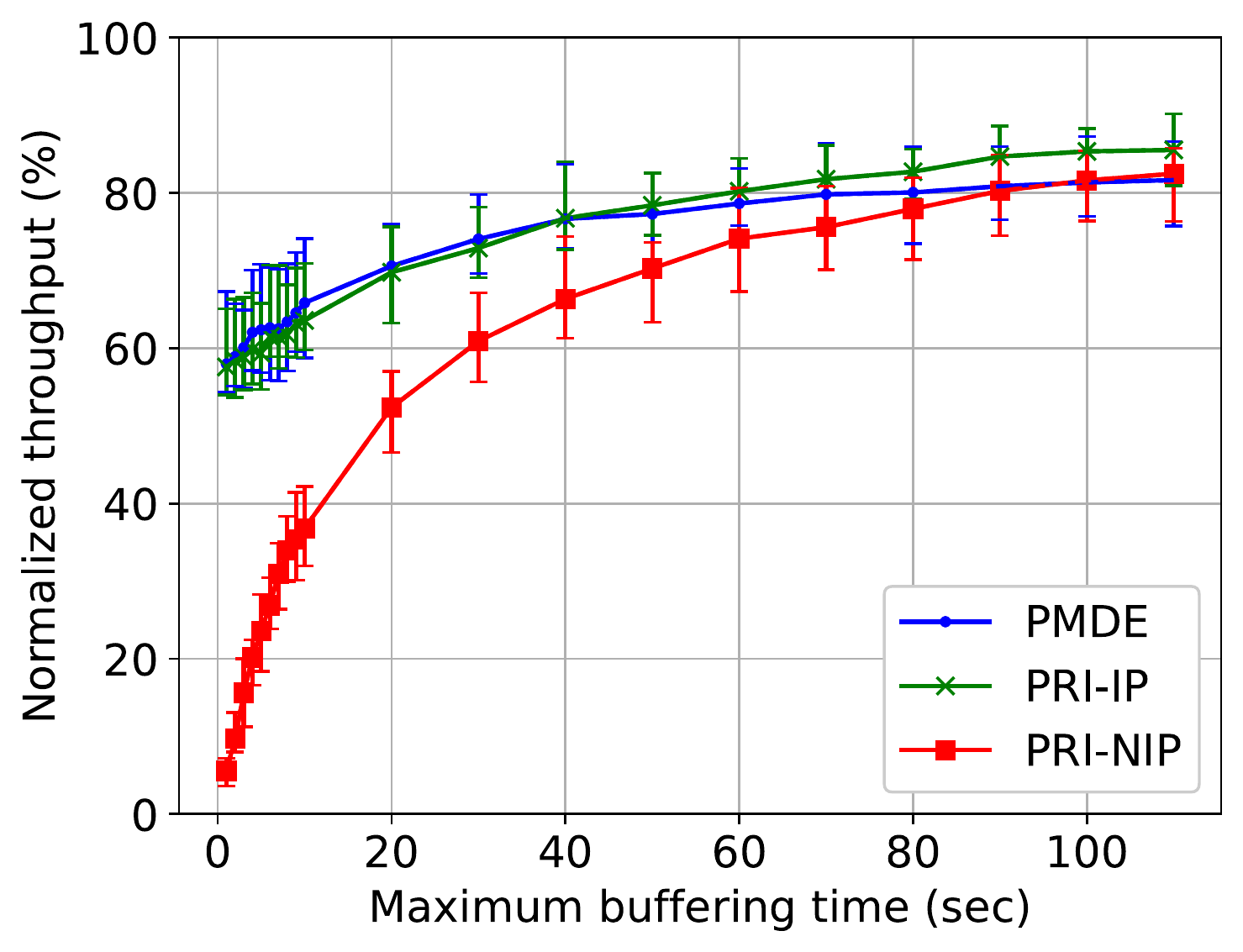}
        \label{fig:results_105_4}
    }
    \subfigure[Smallest amount first]{
        \includegraphics[width=0.18\textwidth]{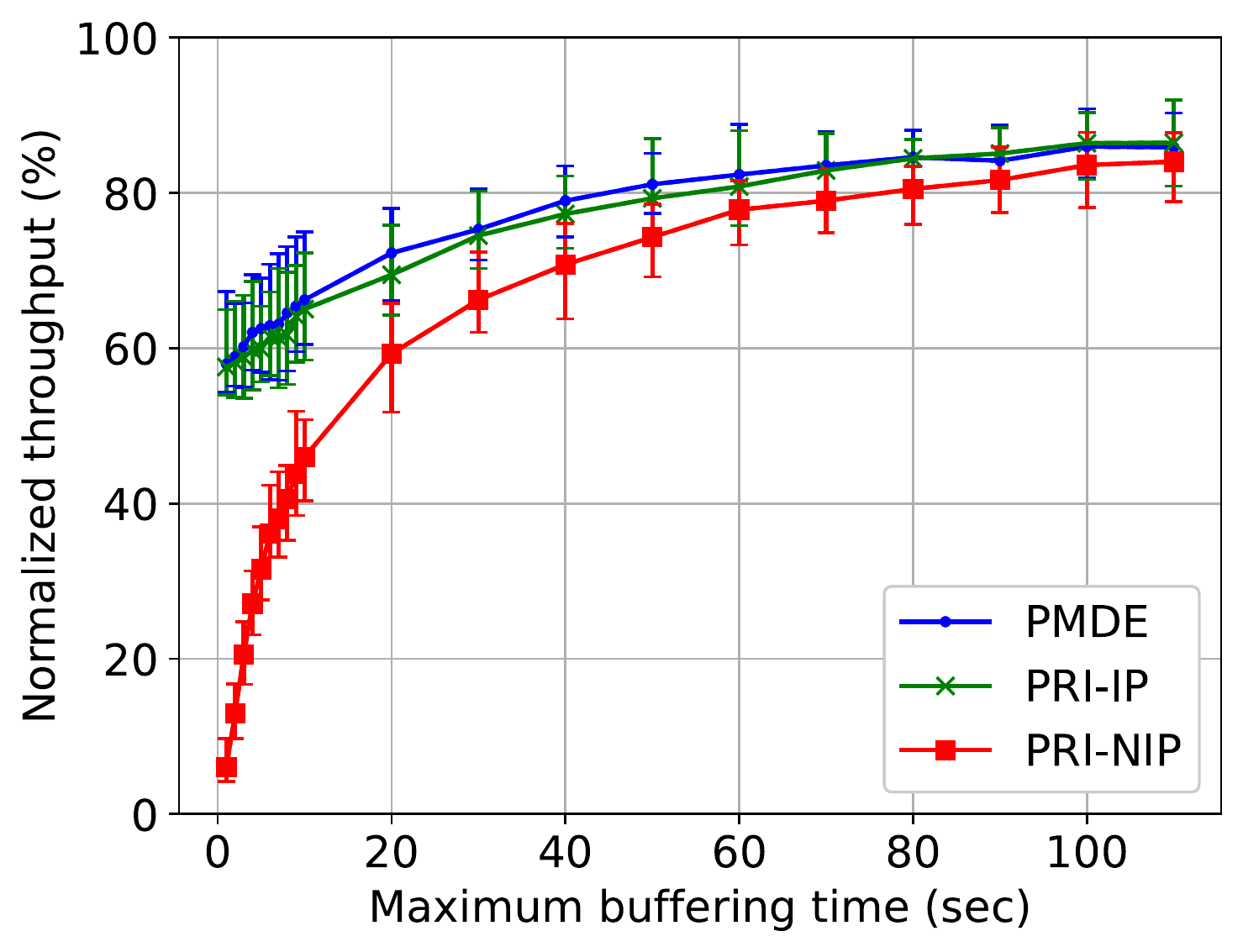}
        \label{fig:results_105_5}
    }

    \caption{Total channel throughput as a function of maximum buffering time for different scheduling policies and buffer disciplines}
    \label{fig:results_105}
\end{figure}



Figure \ref{fig:results_104_105_107} shows the normalized throughput achieved by the three policies under the oldest-first discipline, for different amount distributions.
For Gaussian amounts (Figure \ref{fig:results_103_1}), PMDE outperforms PRI-IP and PRI-NIP.
For uniformly distributed amounts in [0, 300], however (Figure \ref{fig:results_104_1}), we see that for large buffering times PMDE is not as good as the PRI policies.
This is due to the fact that, unlike the Gaussian amounts that were centered around a small value (100), amounts now are more frequently very large, close to the capacity.
As PMDE gives only one chance to transactions to be executed (i.e. on their expiration deadline), while PRI gives them multiple opportunities (i.e. every time the buffer is scanned), very large transactions have a higher probability of being dropped under PMDE than under PRI.
This justification is confirmed by the fact that for smaller Uniform[0, 100] amounts (Figure \ref{fig:results_107_1}), PMDE is indeed the best.
As in practice sending transactions close to the capacity does not constitute good practice and use of a channel, PMDE proves to be the best choice for small- and medium-sized transactions.

\begin{figure}[h]
    \centering
    \subfigure[Gaussian(100, 50)]{
        \includegraphics[width=0.23\textwidth]{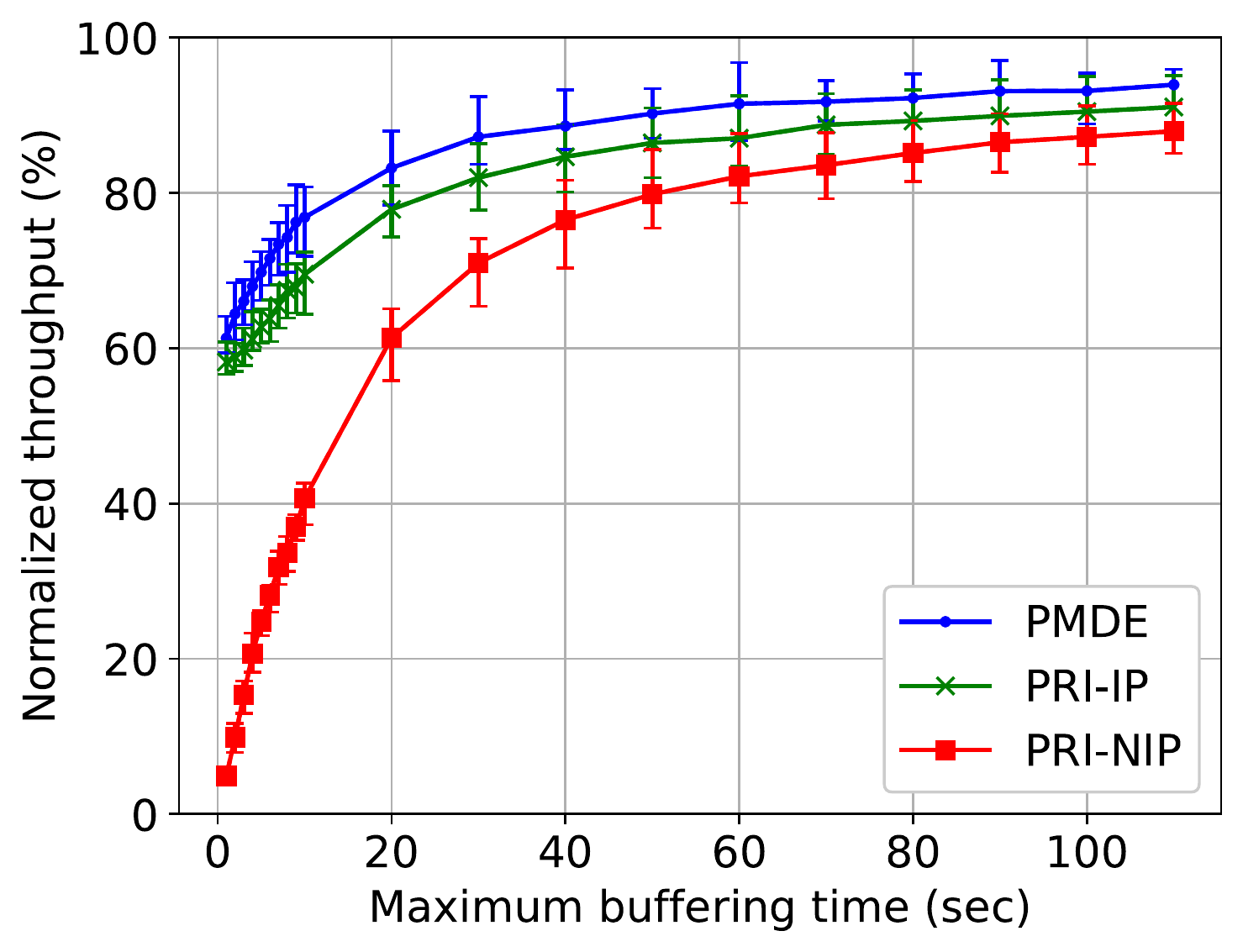}
        \label{fig:results_103_1}
    }
    \subfigure[Uniform(0, 300)]{
        \includegraphics[width=0.23\textwidth]{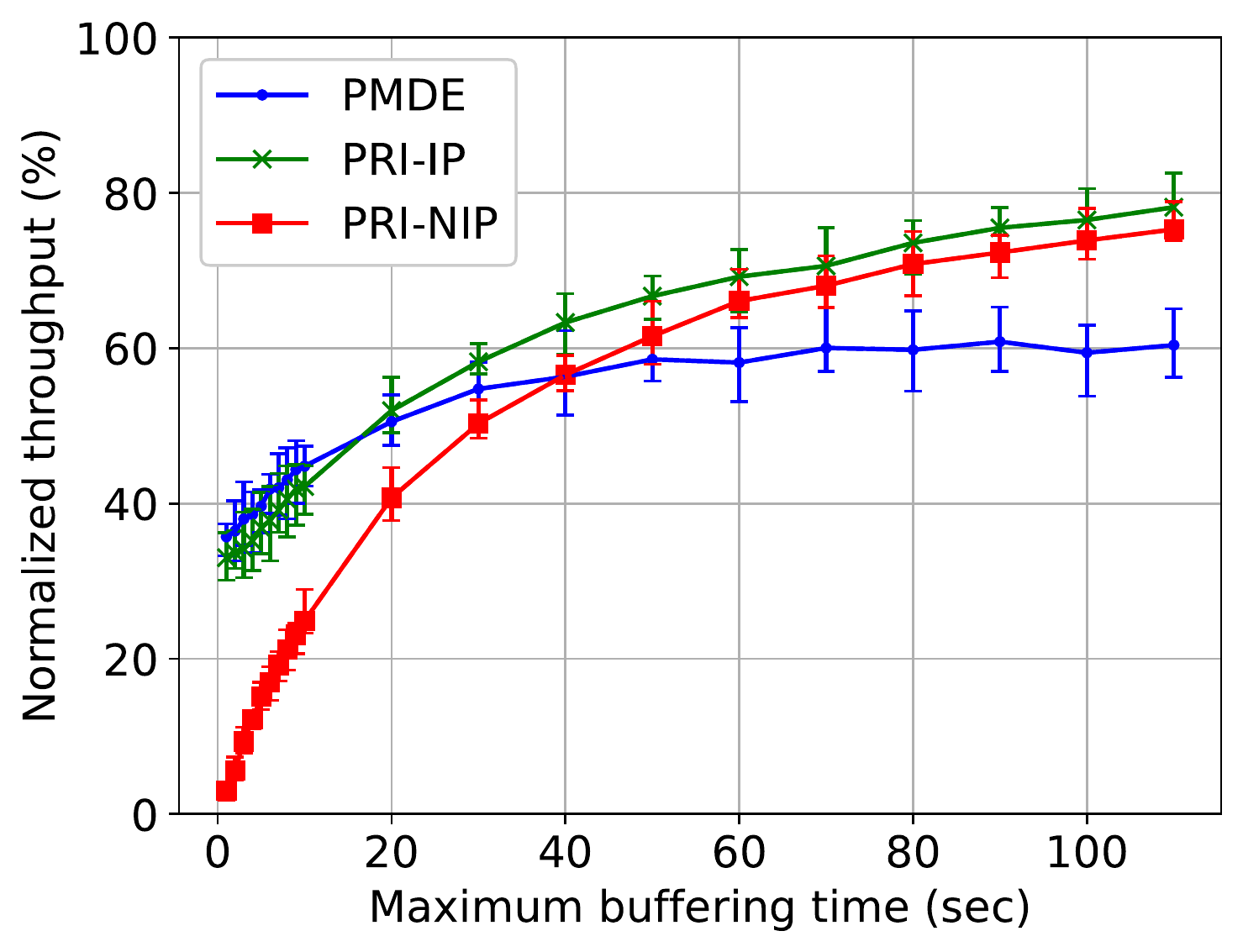}
        \label{fig:results_104_1}
    }
    \subfigure[Uniform(0, 100)]{
        \includegraphics[width=0.23\textwidth]{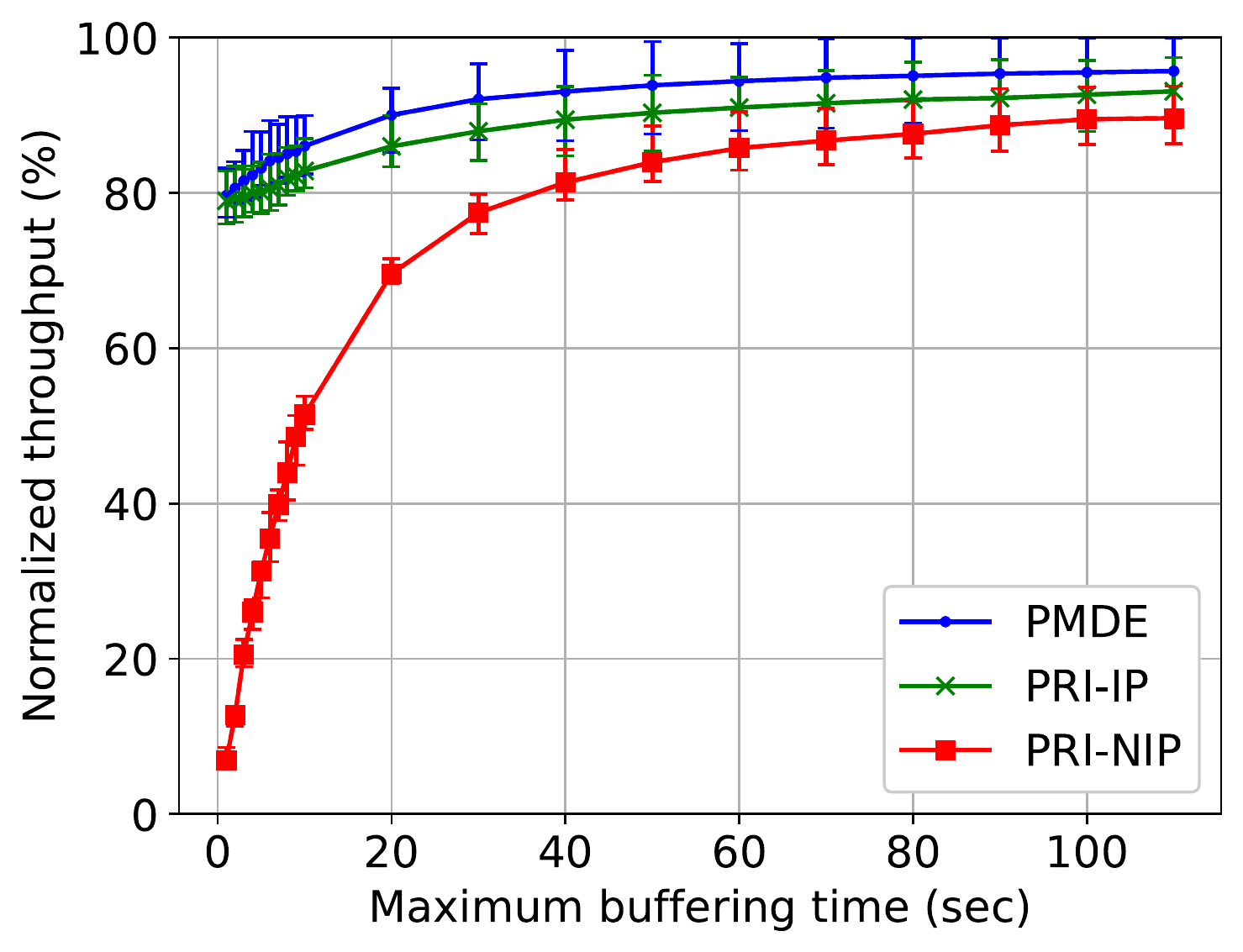}
        \label{fig:results_107_1}
    }
    \subfigure[Empirical]{
        \includegraphics[width=0.23\textwidth]{figures/results_105_1.pdf}
        \label{fig:results_105_1-b}
    }

    \caption{Total channel throughput as a function of maximum buffering time for different scheduling policies and transaction amount distributions}
    \label{fig:results_104_105_107}
\end{figure}

\subsubsection{The importance of privacy and collaboration in scheduling}

We now study a different question: how important is it for \textit{both} nodes to have a buffer, and if they do, to share the contents with the other node (a node might have a buffer and not want to share its contents for privacy reasons).
As mentioned earlier, for PMDE in particular this concern is not applicable, as the only information shared is essentially the expiring transaction(s), which would be revealed anyway at the time of their execution.
For PRI though, a policy prioritizing the oldest transaction in the entire buffer versus in one direction only might have better performance, and provide an incentive to nodes to share their buffers.

\begin{wrapfigure}{L}{0.45\textwidth}
    \centering
    \subfigure[PMDE]{
        \includegraphics[width=0.2\textwidth]{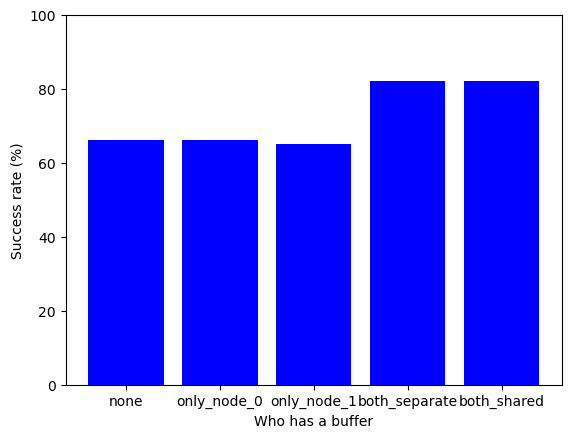}
        \label{fig:results_106_PMDE_sr}
    }
    \subfigure[PMDE]{
        \includegraphics[width=0.2\textwidth]{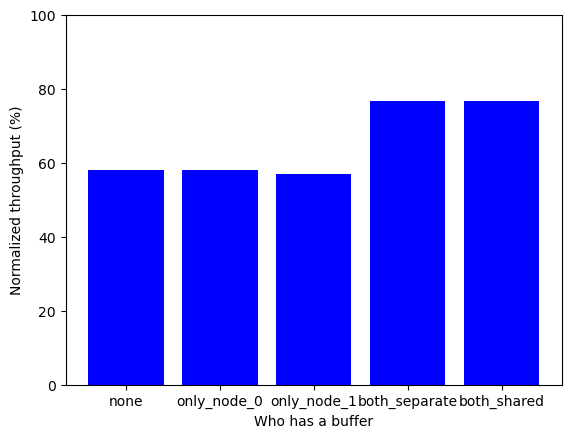}
        \label{fig:results_106_PMDE_nthr}
    }\\
    \subfigure[PRI-IP]{
        \includegraphics[width=0.2\textwidth]{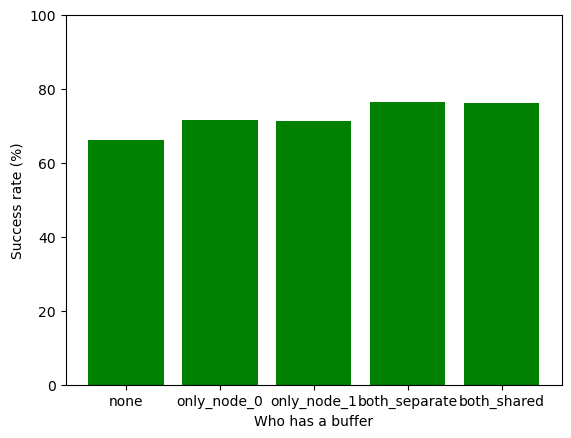}
        \label{fig:results_106_PRI-IP_sr}
    }
    \subfigure[PRI-IP]{
        \includegraphics[width=0.2\textwidth]{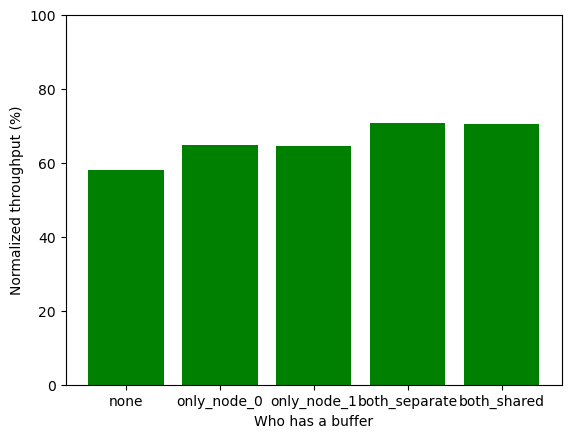}
        \label{fig:results_106_PRI-IP_nthr}
    }
    \subfigure[PRI-NIP]{
        \includegraphics[width=0.2\textwidth]{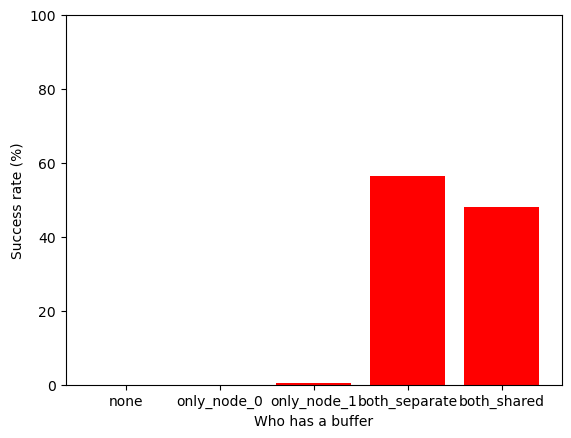}
        \label{fig:results_106_PRI-NIP_sr}
    }
    \subfigure[PRI-NIP]{
        \includegraphics[width=0.2\textwidth]{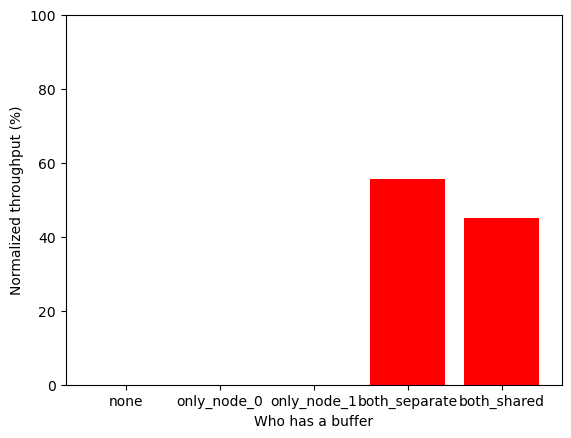}
        \label{fig:results_106_PRI-NIP_nthr}
    }
    \caption{Success rate and normalized throughput for different scheduling policies and node buffering capabilities}
    \label{fig:results_106}
\end{wrapfigure}

We evaluate a scenario with symmetric demand of 500 transactions from each side every 3 seconds on average, with Gaussian amounts as before, and buffering times uniform in [0, 5] seconds.
We evaluate all policies with the oldest-first discipline for all combinations of buffer capabilities at nodes: none, only one or the other, both but without shared knowledge, and both with shared knowledge.
The results for the success rate and the normalized throughput are shown in Figure \ref{fig:results_106}.

We observe that all policies perform better when both nodes have buffers as opposed to one or both of them not having.
Non-immediate processing trivially leads to almost 0 performance when at least one node does not have a buffer because all transactions of this node are dropped (by being redirected to a non-existent buffer), and thus neither can the other node execute any but few transactions because its side gets depleted after executing the first few.
In conclusion, PRI-NIP makes sense only when both nodes have buffers.
We also observe similar performance in PMDE and PRI-IP for separate and shared buffers, which suggests that nodes can apply these policies while keeping their buffer contents private without missing out on performance.
(In PRI-NIP, they actually even miss out on performance by sharing).

\begin{figure}[h]
    \centering
    \subfigure[Buffers: none, Amounts: Gaussian(100, 50)]{
        \includegraphics[width=0.23\textwidth]{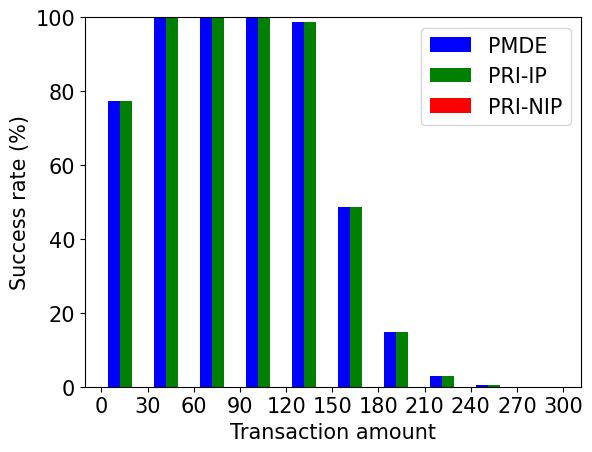}
        \label{fig:results_111_1}
    }
    \subfigure[Buffers: both nodes, Amounts: Gaussian(100, 50)]{
        \includegraphics[width=0.23\textwidth]{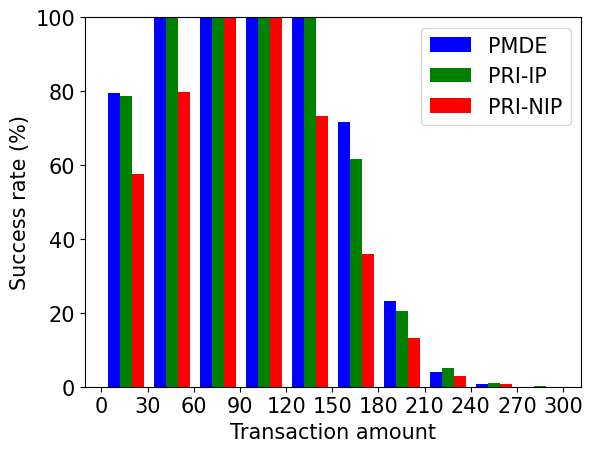}
        \label{fig:results_112_1}
    }
    \subfigure[Buffers: none, Amounts: Empirical]{
        \includegraphics[width=0.23\textwidth]{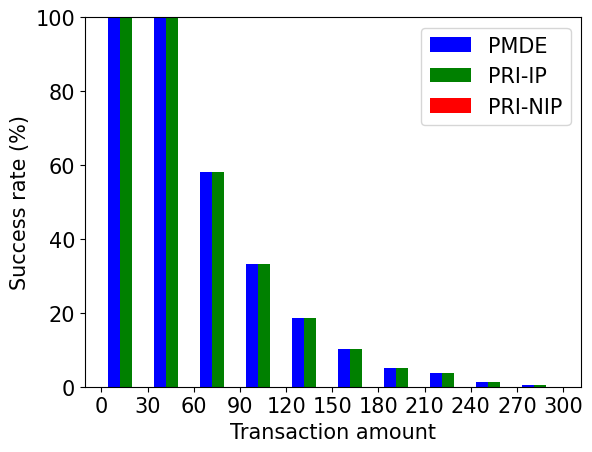}
        \label{fig:results_113_1}
    }
    \subfigure[Buffers: both nodes, Amounts: Empirical]{
        \includegraphics[width=0.23\textwidth]{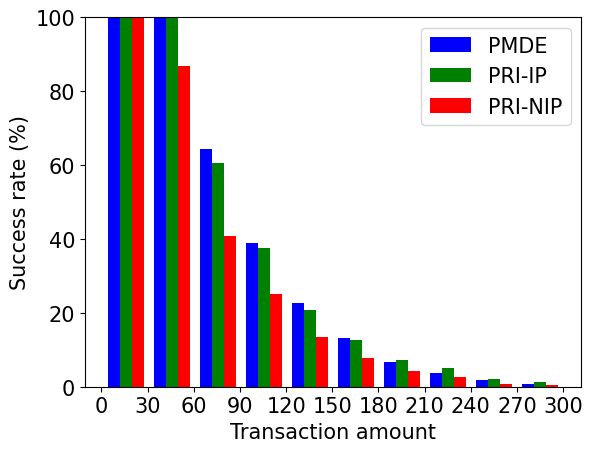}
        \label{fig:results_114_1}
    }
    \caption{Success rate as a function of transaction amount for different scheduling policies}
    \label{fig:results_111_112_113_114}
\end{figure}

\subsubsection{Benefits from buffering as a function of the transaction amount}
\label{sec:amount-graphs}

We now study how the existence of buffers affects the throughput of transactions of different amounts.
We run one experiment with a specific configuration: initial balances 0 and 300, Gaussian(100, 50) transaction amounts, and constant deadlines for all transactions equal to 5 seconds. 
We repeat the experiment 10 times and average the results. 
We partition the transaction amounts in intervals and plot the success rate of transactions in each interval.
We do the same for amounts from the empirical distribution.
The result for \textit{oldest-first} buffer discipline are shown in Figure \ref{fig:results_111_112_113_114} (results for other disciplines are similar).

\begin{wrapfigure}{R}{0.3\textwidth}
    \centering
    \includegraphics[width=0.3\textwidth]{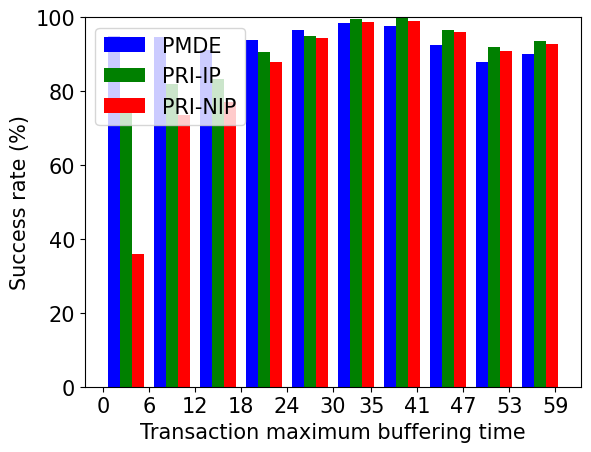}
    \caption{Success rate as a function of transaction maximum buffering time for different scheduling policies}
    \label{fig:results_115}
\end{wrapfigure}

By comparing the graphs where the nodes do not have buffers versus when they both have a shared buffer, we observe that it is the transactions of larger amounts that are actually benefiting from buffering.
The reason is that smaller transactions are more likely to be feasible and clear on their arrival even without a buffer, while larger ones are likely infeasible on arrival.
The zero success rates of PRI-NIP when there are no buffers are trivially due to its design.
We observe similar success rates for PMDE and PRI-IP when there are no buffers, and PMDE being slightly better than PRI-IP when there are buffers, except for possibly a few very large amounts (the latter is for the same reason why PRI-IP is better for large amounts in Figure \ref{fig:results_104_1}).
This insight is important from a user experience perspective: a PCN node, depending on the sizes of the transactions it serves, can decide whether it is worthwhile to use PMDE for higher success rates but have to wait longer for each transaction (till its deadline expiration), or use some more immediate policy like PRI with potentially lower success rate but faster clearing.

\subsubsection{Benefits from buffering as a function of the transaction deadline}
Similarly as in section \ref{sec:amount-graphs}, in this section we study whether transactions with longer versus shorter initial buffering times tend to benefit from the existence of the buffer the most.
We run one experiment with a specific configuration: initial balances 0 and 300, constant transaction amounts equal to 50, and uniform deadlines from 0 to 60 seconds. 
We repeat the experiment 10 times and average the results. 
We partition the buffering times in intervals and plot the success rate of transactions in each interval.
The result for \textit{oldest-first} buffer discipline are shown in Figure \ref{fig:results_115} (results for other disciplines are similar).
We observe that for PMDE there is no differentiation among transactions with different buffering times, as PMDE processes all transactions on their deadline expiration, regardless of when that occurs.
For the PRI policies though, large buffering times (e.g., more than 11 seconds) are generally better, as they allow for more opportunities for the transaction to be considered for processing (recall the buffer is scanned every 3 seconds).
The user experience insight from this experiment is that if a node decides to use PMDE for some reason related for example to the transaction amounts, the deadline values do not matter in terms of the success rate.
On the other hand, if PRI is used, the node should know that this might be disadvantaging transactions with short buffering times.


\section{Extensions to a network setting}
\label{sec:extensions}

In order to extend the PCN throughput maximization problem to an entire network $G=(V,E)$ with a node set $V$ and an edge set $E$, we need to redefine our objective and deal with other factors in our decision making that arise when we have non-direct payments.
The objective in a network setting would be to maximize the total over all pairs of different nodes $S = \sum_{(i, j) \in E} S_{ij}$, where $S_{ij}$ is the throughput of the channel between nodes $i$ and $j$.
The control in a network setting is the policy each node follows in each of its channels.

\subsection{The complete graph case}

The single channel model we have described so far can be immediately extended to model a PCN that is a complete graph. 
In a complete graph, if we assume that transactions are always routed along the shortest path in hop count, all transactions will succeed or fail without needing to take a multihop route. 
Then, all the channels are independent of each other, and choosing the policies for each node that maximize the total network throughput can be decomposed to choosing a policy for each channel separately.

\subsection{The star graph case}

Let us now consider a star graph: all the payments between peripheral nodes have to pass through the central node. 
In this case, the shortest path between a pair of nodes $i$ and $j$ is unique: from $i$ to the central node and from the central node to $j$. 
Moreover, the paths between any pairs of nodes $(i_1, j_1)$, $(i_2, j_2)$, with $i_1,j_1,i_2,j_2$ distinct, are non-overlapping, so the share of the total throughput that corresponds to these paths is the sum of the throughput of each path.
However, for paths where for example $j_1 = j_2$, the policy the central node applies might also depend on whether an arriving payment arrives from $i_1$ or arrives from $i_2$. 
The central node does have this knowledge and may use it to prioritize transactions of $i_1$ vs of $i_2$, or to follow an entirely different policy for transactions arriving from $i_1$ than for transactions arriving from $i_2$.
This shows one more factor that a scheduling policy has to consider in a multihop network apart from the amount and deadline of each transaction: the origin and destination of the transaction.
In the single channel, this information was only used as a binary direction of the payment.

\subsection{The general case}

Unlike the star graph, in a general multihop network there might be multiple shortest\footnote{Shortest can be defined in terms of hops, as the cheapest in terms of fees, or a combination thereof, as is done in Lightning.} paths between a pair of nodes.
Thus, another decision to be made for each transaction is the routing decision: which of the alternative paths to use. 
There are two cases:
\begin{enumerate}
    \item nodes use source routing for payments: the origin node determines the entire path for the payment until it reaches the destination. In this case, intermediate nodes do not make any routing decision; they just forward the payment to the next predetermined hop.
    \item nodes use distributed routing: the node at each hop determines the next one. In this case, the control decision at each node are both the scheduling and the routing decisions.
\end{enumerate}

Deadlines are also more complicated to reason about in a network setting: in a multihop network, there are two possibilities.
Transactions either have an end-to-end deadline by which they have to have been processed or dropped, or have a per-hop deadline by which they have to have been forwarded to the next hop or dropped.
The per-hop deadlines could be chosen from the original sender, however choosing them in a ``right'' way to maximize the throughput is not straightforward.

In conclusion, when seeking generality, a holistic approach to both routing and scheduling is needed.
We believe that stochastic modeling and optimization techniques can be a useful tool towards making optimal decisions based on the details of the network- and channel-level interactions
In addition, as the joint problems become more complex and do not lend themselves to analytical solutions, reinforcement learning can assist in utilizing the insights given by the data trail of the network's operation to empirically derive optimal operational parameters and policies. 
We leave the exploration of these directions to future work.

\section{Related work}
\label{sec:related-work}

Most of the research at a network level in PCNs has focused on the routing problem for multihop transactions.
A channel rebalancing technique is studied in \cite{Pickhardt2019}. In \cite{Tang2020}, privacy-utility tradeoffs in payment channel are explored, in terms of the benefit in success rates nodes can have by revealing noisy versions of thir channel balances.
In \cite{Wang2019}, payments are categorized into ``elephant'' and ``mice'' payments and a different routing approach is followed for each category.

The problem of taking optimal scheduling decisions for arriving payments in the channels of a PCN has not been studied extensively in the literature.
The most relevant work to ours is probably \cite{Sivaraman2020}, which introduces a routing approach for nodes in a PCN that aims to maximize throughput, via packetization of transactions and a transport protocol for congestion control in the different nodes of the network. 
The paper assumes the existence of queues at the different channels, with transaction units queued-up whenever the channel lacks the funds to process them immediately, and a one-bit congestion signal from the routers that helps throttle the admitted demand so that congestion and channel depletion are avoided. 
The paper's focus is on routing, and the scheduling policies used for the queues are heuristically chosen. 
In contrast, we propose that queueing can be beneficial to the overall throughput even if the transaction is feasible on its arrival and opt for a more formal to come up with optimal policies.
Another important difference is that \cite{Sivaraman2020} uses a fluid model for the incoming transaction demand, while we model the demand as distinct incoming transactions arriving as a marked point process and base our policy decisions on the particular characteristics of the specific transactions.

Another interesting relevant work is \cite{Varma2019}, which focuses on throughput-maximizing routing policies: designing a path from the sender to the recipient of each transaction so that the network throughput is maximized and the use of on-chain rebalancing is minimized. 
It proposes dynamic MaxWeight-based routing policies, uses a discrete time stochastic model and models the channel as a double-sided queue, like the ones usually used in ride-hailing systems.
Our model in contrast is a continuous time one, focuses more on scheduling rather than routing, and avoids certain limitations arising from the double-sided queue assumption by modeling the channel state using two separate queues, one for each side.

Finally, \cite{ACD} considers a Payment Service Provider (PSP), a node that can establish multiple channels and wants to profit from relaying others' payments in the network. 
The goal is to define a strategy of the PSP that will determine which of the incoming transactions to process in order to maximize profit from fees while minimizing the capital locked in channels. 
The paper shows that even a simple variant of the scheduling problem is NP-hard, and proposes a polynomial approximation algorithm. 
However, the assumption throughout the paper is that transactions have to be executed or dropped as soon as and in the order in which they arrive, and this differentiates the problem compared to our case.

\section{Conclusion}
\label{sec:conclusion}


In this paper, we studied the transaction scheduling problem in PCNs. 
We defined the PMDE policy and proved its optimality for constant arriving amount distributions.
We also defined a heuristic extension of PMDE as well as heuristic policies PRI for arbitrary amount distributions, and studied in detail the policies via experiments in our simulator.
This work opens the way for further rigorous results in problems of networking nature arising in PCNs.
In the future, we hope to see research on joint routing and scheduling mechanisms that will be able to push the potential of PCNs to their physical limits and make them a scalable and reliable solution for financial applications and beyond.


\bibliographystyle{ACM-Reference-Format}
\bibliography{references}

\newpage
\appendix

\section{Summary of notation}
\label{app:summary-of-notation}

\begin{table}[h]
    \caption{Notation used throughout the paper}
    \label{table:notation}
    \centering
	\begin{tabular}{|c|l|}
		\hline 
        \textbf{Symbol}         & \textbf{Meaning}\\
		\hline 
		$A$, $B$                & Nodes of the channel\\
		\hline
		$C$                     & Capacity of the channel\\
		\hline
		$Q^A(t)$                & Balance of node $A$ on the channel at time $t$\\
		\hline
		$t_n^A$                 & Arrival time of $n$-th transaction of node $A$\\
		\hline		
		$v_n^A$                 & Value (amount) of $n$-th transaction of node $A$\\
		\hline		
		$d_n^A$                 & Maximum buffering time of $n$-th transaction of node $A$\\
		\hline
		$\tau_n^A$              & Deadline expiration time of $n$-th transaction of node $A$\\
        \hline
        $D_k^A(t)$              & Remaining time until expiration of $k$-th transaction in node $A$'s buffer at time $t$\\
		\hline		
        $v_k^A(t)$              & Value (amount) $k$-th transaction in node $A$'s buffer at time $t$\\
		\hline
		$K^A(t)$                & Number of pending transactions in node $A$'s buffer at time $t$\\
        \hline
		$T_{\text{arrival}}$    & Sequence of all transaction arrival times on both sides of the channel\\
		\hline
		$T_{\text{expiration}}$    & Sequence of all deadline expiration times on both sides of the channel\\
		\hline		
		$x(t)$                  & System state at time $t$ (channel balances and buffer contents)\\
		\hline
		$u(t)$                  & Action taken at time $t$\\
		\hline
		$U(x(t))$               & Action space at time $t$\\
		\hline
		$\Tilde{v}_{EX}^{u(t)}(t)$      & Total amount processed by the channel at time $t$\\
		\hline
        $\Tilde{v}_{DR}^{u(t)}(t)$	    & Total amount rejected by the channel at time $t$\\
        \hline
        $\pi$                   & Control policy\\
        \hline              
		$\Pi$                   & Set of admissible control policies\\
		\hline
		$V_{\text{total}}(t)$   & total amount of arrivals until time $t$\\
        \hline
		$S^{\pi}(t)$            & Total channel throughput up to time $t$ under policy $\pi$\\
		\hline
		$R^{\pi}(t)$            & Total channel blockage (rejected amount) up to time $t$ under policy $\pi$\\
		\hline
		$P^{\pi}(t)$            & Amount of pending transactions at time $t$ under policy $\pi$\\
		\hline 
	\end{tabular}
\end{table}

\section{Proof of Lemma \ref{lemma:channel-with-buffers}}
\label{app:proof-of-main-lemma}

We restate Lemma \ref{lemma:channel-with-buffers} here for the reader's convenience.

\begin{L2}
For every policy $\pi \in \Pi^{DE}$, there exists a policy $\Tilde{\pi} \in \Pi^{DE}$ that acts similarly to PMDE at $t = \tau_1$ and is such that when the system is in state $x(0)$ at $t=0$ and policies $\pi$ and $\Tilde{\pi}$ act on it, the corresponding total rejected amount processes $R$ and $\Tilde{R}$ can be constructed via an appropriate coupling of the arrival processes so that
\begin{equation}
\Tilde{R}(t) \leq R(t), t \in \tau_1, \tau_2, \dots
\end{equation}
\end{L2}

\begin{proof}
We construct $\Tilde{\pi}$ and couple the blockage processes under $\pi$ and $\Tilde{\pi}$ so that \eqref{eqn:lemma-with-buffers-main-blockage} holds.
Let the first transaction arrivals be identical (same arrival times, values and deadlines) under both policies.
Denote the time instant when the first deadline expiration occurs by $\tau_1$.
Without loss of generality, let $p_1^A$ be from node $A$ to node $B$ be one of the transactions expiring at $\tau_1$.
We distinguish the following cases based on the actions policy $\Tilde{\pi}$ might take at $\tau_1$:
\begin{enumerate}
    \item $\Tilde{\pi}$ drops $p_1^A$ at $\tau_1$. 
    The only reason why this would happen, since $\Tilde{\pi}$ mimics PMDE at $\tau_1$, is if $p_1^A$ is infeasible and there is no pending feasible transaction on the opposite side.
    
    The fact that $p_1^A$ is infeasible, since all transaction amounts are of the same fixed value, means that all transactions in the same direction are individually infeasible for $\pi$ as well.
    
    The fact that there is no pending feasible transaction on the opposite side means that $\pi$ cannot process any individual transaction in the opposite direction.
    
    Therefore, $\pi$ has no choice but to drop $p_1^A$, and possibly drop some other transactions.
    Denote the set of these other transactions dropped by $\pi$ at $\tau_1$ by $P_1^d$.
    
    At the next expiration time $\tau_2$, we let $\Tilde{\pi}$ operate in two phases:
    in the first phase, we let $\Tilde{\pi}$ drop all transactions in $P_1^d$.
    So now both the states and the blockages under both $\pi$ and $\Tilde{\pi}$ are the same.
    In the second phase at $\tau_2$, let $\Tilde{\pi}$ match the action of that $\pi$ takes at $\tau_2$.
    For all future expiration times, we let $\Tilde{\pi}$ be identical to $\pi$.
    Then the blockage processes under $\pi$ and $\Tilde{\pi}$ are identical, and therefore \eqref{eqn:lemma-with-buffers-main-blockage} holds.

    \item $p_1$ is individually feasible at $\tau_1$, and $\Tilde{\pi}$ processes it at $t=\tau_1$
    
    We distinguish cases based on what policy $\pi$ does at $\tau_1$.
    
    \begin{enumerate}
        \item At $\tau_1$, $\pi$ processes $p_1^A$, drops some transactions from possibly both sides (set $P_1^d$) and processes some other transactions from possibly both sides (set $P_1^p$).
        Then $R(\tau_1) = |P_1^d|v$ and $\Tilde{R}(\tau_1) = 0$.
        
        Let $\tau_2$ be the next time of deadline expiration. 
        At $\tau_2$, we let $\Tilde{\pi}$ operate in two phases.
        In the first phase, we let $\Tilde{\pi}$ drop all transactions in $P_1^d$ and process all transactions in $P_1^p$ at $\tau_2$, just like $\pi$ did at $\tau_1$.
        Now the states under $\pi$ and $\Tilde{\pi}$ are the same, and the same is true for the blockages: $\Tilde{R} = R = |P_1^d|v$.
        In the second phase, we let $\Tilde{\pi}$ be identical to $\pi$ at $\tau_2$.
        For all future expiration times, we also let $\Tilde{\pi}$ be identical to $\pi$, and both the states and the blockages under $\pi$ and $\Tilde{\pi}$ match.
        
        \item At $\tau_1$, $\pi$ drops $p_1^A$, drops some transactions (set $P_1^d$) and processes some other transactions (set $P_1^p$).
        Then $R(\tau_1) = (|P_1^d|+1)v$ and $\Tilde{R}(\tau_1) = 0$.
        
        Let $\tau_2$ be the next time of deadline expiration. 
        At $\tau_2$, we let $\Tilde{\pi}$ operate in two phases.
        In the first phase, we let $\Tilde{\pi}$ drop all transactions in $P_1^d$ and attempt to process all transactions in $P_1^p$ at $\tau_2$, just like $\pi$ did at $\tau_1$.
        Depending on whether the latter is possible, we distinguish the following cases:
        
        \begin{enumerate}
            \item If this is possible, then now the states under $\pi$ and $\Tilde{\pi}$ are almost the same, with the only difference being that node $A$ has processed one more transaction from A to B under $\Tilde{\pi}$.
            So, at that moment, for the balances under the two policies we have $\left( \Tilde{Q}^A, \Tilde{Q}^B \right) = (Q^A - v, Q^B + v)$, and for the blockages we have $\Tilde{R} = R + v$.
            In the second phase at $\tau_2$, and at subsequent deadline expiration times, we let $\Tilde{\pi}$ match $\pi$ (and thus the relationships between the balances and the blockages under the two policies remain the same).
            This will be always possible except if at some point $\pi$ executes some transaction from A to B and since A's balance under $\Tilde{\pi}$ is less than under $\pi$, $\Tilde{\pi}$ is not able to process it.
            At that moment, we let $\Tilde{\pi}$ drop that infeasible transaction and match $\pi$ in the rest of $\pi$'s actions.
            Then we have $\left( \Tilde{Q}^A, \Tilde{Q}^B \right) = (Q^A, Q^B)$ and $\Tilde{R} = R$.
            For all future expiration times, we also let $\Tilde{\pi}$ be identical to $\pi$, and both the states and the blockages under $\pi$ and $\Tilde{\pi}$ match.
            
            \item If this is not possible, the only transaction feasible under $\pi$ but not under $\Tilde{\pi}$ must be from A to B.
            We let $\Tilde{\pi}$ drop that transaction and follow $\pi$ in all other transactions $\pi$ processes or drops.
            So now, $\left( \Tilde{Q}^A, \Tilde{Q}^B \right) = (Q^A, Q^B)$ and $\Tilde{R} = R$.
            For all future expiration times, we let $\Tilde{\pi}$ be identical to $\pi$, and both the states and the blockages under $\pi$ and $\Tilde{\pi}$ match.
        \end{enumerate}
    \end{enumerate}

    \item $p_1^A$ is individually infeasible at $\tau_1$, but $\Tilde{\pi}$ processes $p_1^A$ at $t=\tau_1$ by matching it with a transaction $\Tilde{p}_1^B$ from B to A
    
    We distinguish cases based on what policy $\pi$ does at $\tau_1$.
    
    \begin{enumerate}
        \item At $\tau_1$, $\pi$ processes $p_1^A$, drops some transactions from possibly both sides (set $P_1^d$) and processes some other transactions from possibly both sides (set $P_1^p$).
        Then $R(\tau_1) = |P_1^d|v$ and $\Tilde{R}(\tau_1) = 0$.
        Since $p_1^A$ is individually infeasible at $\tau_1$ (for both policies), the only way $\pi$ can process $p_1^A$ at $\tau_1$ is if it matches it with another transaction from the opposite direction; call the matched transaction $p_1^B \in P_1^p$.
        
        Let $\tau_2$ be the next time of deadline expiration. 
        At $\tau_2$, we let $\Tilde{\pi}$ operate in two phases.
        We distinguish cases based on what policy $\pi$ does with transaction $\Tilde{p}_1^B$ at $\tau_1$.
        
        \begin{enumerate}
            \item $\Tilde{p}_1^B \in P_1^p$ (i.e. $\Tilde{p}_1^B$ is processed by $\pi$ at $\tau_1$)
            
            
                In this case, in the first phase at $\tau_2$ we let $\Tilde{\pi}$ drop all transactions in $P_1^d \setminus \{p_1^B\}$ and process all transactions in $P_1^p$ at $\tau_2$, just like $\pi$ did at $\tau_1$.
                Now the states under $\pi$ and $\Tilde{\pi}$ are the same, and the same is true for the blockages: $\Tilde{R} = R = |P_1^d|v$.
                In the second phase, we let $\Tilde{\pi}$ be identical to $\pi$ at $\tau_2$.
                For all future expiration times, we also let $\Tilde{\pi}$ be identical to $\pi$, and both the states and the blockages under $\pi$ and $\Tilde{\pi}$ match.
        
                
        
            \item $\Tilde{p}_1^B \notin P_1^p$ and $\Tilde{p}_1^B \in P_1^d$ (i.e. $\Tilde{p}_1^B$ is dropped by $\pi$ at $\tau_1$, so it is not in B's buffer anymore under $\pi$ at $\tau_2$)
            
            In this case, in the first phase at $\tau_2$ we let $\Tilde{\pi}$ drop all transactions in $P_1^d$, drop also $p_1^B$ from $P_1^p$, and process all transactions in $P_1^p \setminus \{p_1^B\}$ at $\tau_2$.
            Now the states under $\pi$ and $\Tilde{\pi}$ are the same, and the same is true for the blockages: $\Tilde{R} = R = (|P_1^d|+1)v$.
            In the second phase, we let $\Tilde{\pi}$ be identical to $\pi$ at $\tau_2$.
            For all future expiration times, we also let $\Tilde{\pi}$ be identical to $\pi$, and both the states and the blockages under $\pi$ and $\Tilde{\pi}$ match.
            
            \item $\Tilde{p}_1^B \notin P_1^p$ and $\Tilde{p}_1^B \notin P_1^d$ (i.e. $\Tilde{p}_1^B$ is neither processed nor dropped by $\pi$ at $\tau_1$, so it is still in B's buffer under $\pi$ at $\tau_2$)
            
            In this case, in the first phase at $\tau_2$ we let $\Tilde{\pi}$ drop all transactions in $P_1^d$ and attempt to process all transactions in $P_1^p$ at $\tau_2$, just like $\pi$ did at $\tau_1$.
            
            Depending on whether the latter is possible, we distinguish the following cases:
            
            \begin{enumerate}
                \item If this is possible, then now the states under $\pi$ and $\Tilde{\pi}$ are almost the same, with the only difference being that node $A$ has processed one more transaction ($\Tilde{p}_1^B$) from B to A under $\Tilde{\pi}$.
                So, at that moment, we have $\left( \Tilde{Q}^A, \Tilde{Q}^B \right) = (Q^A + v, Q^B - v)$ and $\Tilde{R} = R - v$.
                In the second phase at $\tau_2$, and at subsequent deadline expiration times, we let $\Tilde{\pi}$ match $\pi$ (and thus the relationships between the balances and the blockages under the two policies remain the same).
                This will be always possible except if at some point $\pi$ executes some transaction from B to A and since B's balance under $\Tilde{\pi}$ is less than under $\pi$, $\Tilde{\pi}$ is not able to process it.
                At that moment, we let $\Tilde{\pi}$ drop that infeasible transaction and match $\pi$ in the rest of $\pi$'s actions.
                Then we have $\left( \Tilde{Q}^A, \Tilde{Q}^B \right) = (Q^A, Q^B)$ and $\Tilde{R} = R$.
                For all future expiration times, we let $\Tilde{\pi}$ be identical to $\pi$, and both the states and the blockages under $\pi$ and $\Tilde{\pi}$ match.
                
                \item If this is not possible, the only transaction feasible under $\pi$ but not under $\Tilde{\pi}$ must be from B to A (so in the same direction as $p_1^B$, or $p_1^B$ itself).
                We let $\Tilde{\pi}$ drop $p_1^B$.
                So now, $\left( \Tilde{Q}^A, \Tilde{Q}^B \right) = (Q^A, Q^B)$ and $\Tilde{R} = R$.
                
                In the second phase, we let $\Tilde{\pi}$ follow $\pi$ in all other transactions $\pi$ processes or drops at $\tau_2$.
                For all future expiration times, we let $\Tilde{\pi}$ be identical to $\pi$, and both the states and the blockages under $\pi$ and $\Tilde{\pi}$ match.
            \end{enumerate}
        \end{enumerate}       
        
        \item At $\tau_1$, $\pi$ drops $p_1^A$, drops some transactions from possibly both sides (set $P_1^d$) and processes some other transactions from possibly both sides (set $P_1^p$).
        Then $R(\tau_1) = (|P_1^d|+1)v$ and $\Tilde{R}(\tau_1) = 0$.
        
        Let $\tau_2$ be the next time of deadline expiration. 
        At $\tau_2$, we let $\Tilde{\pi}$ operate in two phases.
        We distinguish cases based on what policy $\pi$ does with transaction $\Tilde{p}_1^B$ at $\tau_1$.
        
        \begin{enumerate}
            \item $\Tilde{p}_1^B \in P_1^p$ (i.e. $\Tilde{p}_1^B$ is processed by $\pi$ at $\tau_1$)
        
            In this case, in the first phase at $\tau_2$ we let $\Tilde{\pi}$ drop all transactions in $P_1^d$ and process all transactions in $P_1^p \setminus \{\Tilde{p}_1^B\}$, just like $\pi$ did at $\tau_1$.
            Depending on whether the latter is possible, we distinguish the following cases:
        
        
                \begin{enumerate}
                    \item If this is possible, then now the states under $\pi$ and $\Tilde{\pi}$ are almost the same, with the only difference being that node $A$ has processed one more transaction ($p_1^A$) from A to B under $\Tilde{\pi}$.
                    So, at that moment, we have $\left( \Tilde{Q}^A, \Tilde{Q}^B \right) = (Q^A - v, Q^B + v)$ and $\Tilde{R} = R + v$.
                    In the second phase at $\tau_2$, and at subsequent deadline expiration times, we let $\Tilde{\pi}$ match $\pi$ (and thus the relationships between the balances and the blockages under the two policies remain the same).
                    This will be always possible except if at some point $\pi$ executes some transaction from A to B and since A's balance under $\Tilde{\pi}$ is less than under $\pi$, $\Tilde{\pi}$ is not able to process it.
                    At that moment, we let $\Tilde{\pi}$ drop that infeasible transaction and match $\pi$ in the rest of $\pi$'s actions.
                    Then we have $\left( \Tilde{Q}^A, \Tilde{Q}^B \right) = (Q^A, Q^B)$ and $\Tilde{R} = R$.
                    For all future expiration times, we let $\Tilde{\pi}$ be identical to $\pi$, and both the states and the blockages under $\pi$ and $\Tilde{\pi}$ match.
                    
                    \item If this is not possible, the only transaction feasible under $\pi$ but not under $\Tilde{\pi}$ must be from A to B (so in the same direction as $p_1^A$, or $p_1^A$ itself).
                    We let $\Tilde{\pi}$ drop that transaction.
                    So now, $\left( \Tilde{Q}^A, \Tilde{Q}^B \right) = (Q^A, Q^B)$ and $\Tilde{R} = R$.
                    
                    In the second phase, we let $\Tilde{\pi}$ follow $\pi$ in all other transactions $\pi$ processes or drops at $\tau_2$.
                    For all future expiration times, we let $\Tilde{\pi}$ be identical to $\pi$, and both the states and the blockages under $\pi$ and $\Tilde{\pi}$ match.
                \end{enumerate}
        
            \item $\Tilde{p}_1^B \notin P_1^p$ and $\Tilde{p}_1^B \in P_1^d$ (i.e. $\Tilde{p}_1^B$ is dropped by $\pi$ at $\tau_1$, so it is not in B's buffer anymore under $\pi$ at $\tau_2$)
            
            In this case, in the first phase at $\tau_2$ we let $\Tilde{\pi}$ drop all transactions in $P_1^d$, and process all transactions in $P_1^p$.
            Now the states (balances and buffer contents) under $\pi$ and $\Tilde{\pi}$ are the same.
            For the blockages, we have: $\Tilde{R} = R - 2v \leq R$.
            In the second phase, we let $\Tilde{\pi}$ be identical to $\pi$ at $\tau_2$.
            For all future expiration times, we also let $\Tilde{\pi}$ be identical to $\pi$, and both the states and the blockages under $\pi$ and $\Tilde{\pi}$ match.
            So \eqref{eqn:lemma-with-buffers-main-blockage} holds for all expiration times.
            
            \item $\Tilde{p}_1^B \notin P_1^p$ and $\Tilde{p}_1^B \notin P_1^d$ (i.e. $\Tilde{p}_1^B$ is neither processed nor dropped by $\pi$ at $\tau_1$, so it is still in B's buffer under $\pi$ at $\tau_2$)
            
            In this case, in the first phase at $\tau_2$ we let $\Tilde{\pi}$ drop all transactions in $P_1^d$ and attempt to process all transactions in $P_1^p$ at $\tau_2$, just like $\pi$ did at $\tau_1$.
            This will be always possible, as the only difference in the states under $\pi$ and $\Tilde{\pi}$ is that B's buffer contains $\Tilde{p}_1^B$ under $\pi$ but not under $\Tilde{\pi}$.
            In terms of executed transactions, $\Tilde{\pi}$ compared to $\pi$ has additionally processed the pair $\left( p_1^A, \Tilde{p}_1^B \right)$ and pairs have no net effect on the balances. So $\left( \Tilde{Q}^A, \Tilde{Q}^B \right) = (Q^A, Q^B)$ and $\Tilde{R} = R - v \leq R$.
            
            In the second phase at $\tau_2$, and at subsequent deadline expiration times, we let $\Tilde{\pi}$ match $\pi$ (and thus the relationships between the balances and the blockages under the two policies remain the same).
            This will be always possible except if at some point $\pi$ decides to execute or drop $\Tilde{p}_1^B$ that is still in B's buffer under $\pi$ but not under $\Tilde{\pi}$.
            
            \begin{enumerate}
                \item If $\pi$ drops $\Tilde{p}_1^B$, then the states under $\pi$ and $\Tilde{\pi}$ completely match (balances and buffer contents), and we have $\Tilde{R} = R - v \leq R$.
                At that moment, and for all future expiration times, we let $\Tilde{\pi}$ be identical to $\pi$, and thus both the states and the blockages under $\pi$ and $\Tilde{\pi}$ match.
                
                \item If $\pi$ processes $\Tilde{p}_1^B$, then $\Tilde{\pi}$ cannot do the same, and thus $\pi$ has processed one more transaction from B to A than $\Tilde{\pi}$.
                So we have $\left( \Tilde{Q}^A, \Tilde{Q}^B \right) = (Q^A - v, Q^B + v)$, same buffer contents, and $\Tilde{R} = R - v \leq R$.
                
                From then on, we let $\Tilde{\pi}$ match $\pi$ for as long as this is possible (and thus the relationships between the balances and the blockages under the two policies remain the same).
                The only reason why $\Tilde{\pi}$ at some point might not be able to match $\pi$ is if it $\pi$ executes a transaction from B to A that $\Tilde{\pi}$ cannot because B's balance under $\Tilde{\pi}$ is less than under $\pi$.
                At that time, we let $\Tilde{\pi}$ drop that transaction and match $\pi$ in all its other actions.
                So now $\left( \Tilde{Q}^A, \Tilde{Q}^B \right) = (Q^A, Q^B)$, the buffer contents are the same, and $\Tilde{R} = R$.
                For all future expiration times, we let $\Tilde{\pi}$ be identical to $\pi$, and both the states and the blockages under $\pi$ and $\Tilde{\pi}$ match.

            \end{enumerate}
        \end{enumerate}
    \end{enumerate}
\end{enumerate}
Thus, in all possible cases, it is possible to couple $\Tilde{\pi}$ with $\pi$ so that \eqref{lemma:channel-with-buffers} is satisfied.
This concludes the proof of the lemma.
\end{proof}

\section{Proof of Theorem \ref{thm:analytical-calculation-of-success-rate}}
\label{app:proof-of-analytical-calculation}

We restate Theorem \ref{thm:analytical-calculation-of-success-rate} here for the reader's convenience.

\begin{T2}
For a single channel between nodes $A$ and $B$ with capacity $C$, and Poisson transaction arrivals with rates $\lambda_A \neq \lambda_B$ and fixed amounts equal to $v$, the maximum possible success rate of the channel is
\begin{equation}
SR_{\text{opt}} = \lambda_A \left( 1 - \frac{\lambda_B/\lambda_A - 1}{(\lambda_B/\lambda_A)^{\Tilde{C}+1} - 1} \right) 
+ \lambda_B \left( 1 - \left(\frac{\lambda_B}{\lambda_A}\right)^{\Tilde{C}} \frac{\lambda_B/\lambda_A - 1}{(\lambda_B/\lambda_A)^{\Tilde{C}+1} - 1} \right)
\end{equation}
where $\Tilde{C} = \lfloor \frac{C}{v} \rfloor$.\\

When $\lambda_A = \lambda_B = \lambda$, the maximum possible success rate is
\begin{equation}
SR_{\text{opt}} = \frac{2\lambda \Tilde{C}}{\Tilde{C}+1}
\end{equation}
\end{T2}

\begin{proof}
The maximum possible success rate of the channel is the success rate under the optimal policy PFI.

We focus on the balance of node $A$, which has an initial value of $b_A$. 
Over time, it forms a continuous-time, time-homogeneous Markov chain that is a birth-death process with states $\{0, 1, \dots, C\}$.
Since all transactions are of amount $v$, only the states that are a multiple of $v$ away from $b_A$ are reachable.
Therefore, we reduce this Markov chain to another one with fewer states: $\{0, 1, \dots, \Tilde{C}\}$, where $\Tilde{C} = \lfloor C/v \rfloor$ and state $k$ in the new Markov chain corresponds to state $\bmod(b_A, v) + k v$ in the initial Markov chain, $k = 1, \dots, \Tilde{C} - 1$.

The state transition diagram of the new Markov chain is the following:

\begin{figure}[h]
    \centering
    \includegraphics[scale=0.5]{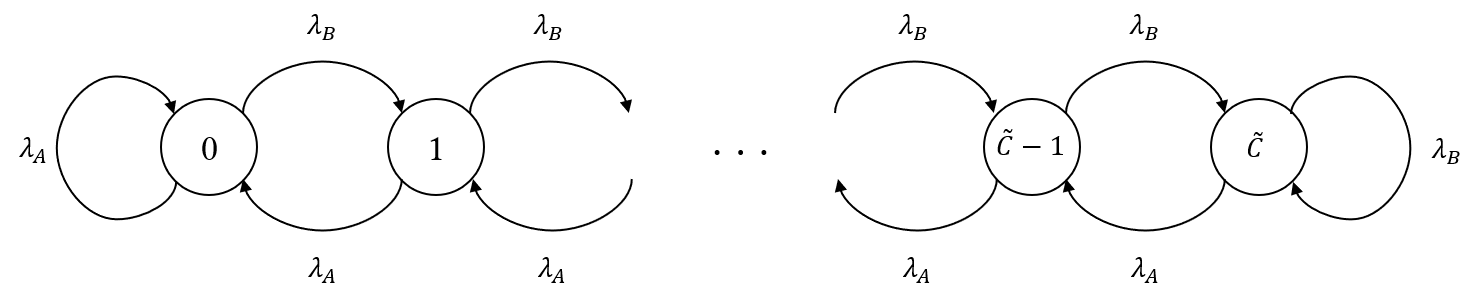}
    \label{fig:state_transition_diagram}
\end{figure}

Let $\pi = (\pi_1, \dots, \pi_{\Tilde{C}})$ be the stationary distribution. 

The long-term rejection rate $RR_{\text{opt}} $ (fraction of rejected transactions) and success rate $SR_{\text{opt}}$ of the channel can be calculated as follows:

\begin{equation}
RR_{\text{opt}}  = \lambda_A \pi_0 + \lambda_B \pi_C
\end{equation}

\begin{align}
    SR_{\text{opt}}   &= \lambda_A + \lambda_B - RR_{\text{opt}}  \\
        &= \lambda_A (1 - \pi_0) + \lambda_B (1 - \pi_{\Tilde{C}})    
\end{align}

Therefore, we need to calculate the stationary distribution.
The local balance equations are:
\begin{equation}
\lambda_B \pi_k = \lambda_A \pi_{k+1}
\end{equation}

So $\pi_{k+1} = \frac{\lambda_B}{\lambda_A} \pi_k = \left(\frac{\lambda_B}{\lambda_A}\right)^k \pi_0,~k = 0, \dots, \Tilde{C} - 1$.

The normalization constraint hence yields

\begin{equation}
\sum_{k=0}^{\Tilde{C}} \pi_k = 1
\implies \pi_0 \sum_{k=0}^{\Tilde{C}} \left(\frac{\lambda_B}{\lambda_A}\right)^k = 1
\end{equation}


We now distinguish between the two cases:\\
If $\lambda_A \neq \lambda_B$, then:

\begin{equation}
\pi_0 = \frac{\lambda_B/\lambda_A - 1}{(\lambda_B/\lambda_A)^{\Tilde{C}+1} - 1}
\end{equation}
and 
\begin{equation}
\pi_k = \left(\frac{\lambda_B}{\lambda_A}\right)^k \pi_0,~k = 1, \dots, \Tilde{C}
\end{equation}

If $\lambda_A = \lambda_B = \lambda$, then
\begin{equation}
\pi_k = \frac{1}{\Tilde{C}+1},~k = 0, 1, \dots, \Tilde{C}
\end{equation}

Plugging the stationary distribution into the success rate formula completes the proof.
\end{proof}

\end{document}